 \def \VersionForArXiV {}
\newif\iffinal
\newif\iffull
\def\@doi#1{\href{https://doi.org/#1}
      {\ttfamily https://doi.org/#1}\egroup}}
\def\@doi#1{\ttfamily https://doi.org/#1\egroup}}
  \def\doi{\bgroup\catcode`\_=12\relax\@doi}}
\newcommand{\LongVersion}[1]{}
\definecolor{darkblue}{rgb}{0, 0, 0.7}
\renewcommand{\orcidID}[1]{\href{#1}{\includegraphics[height=9pt]{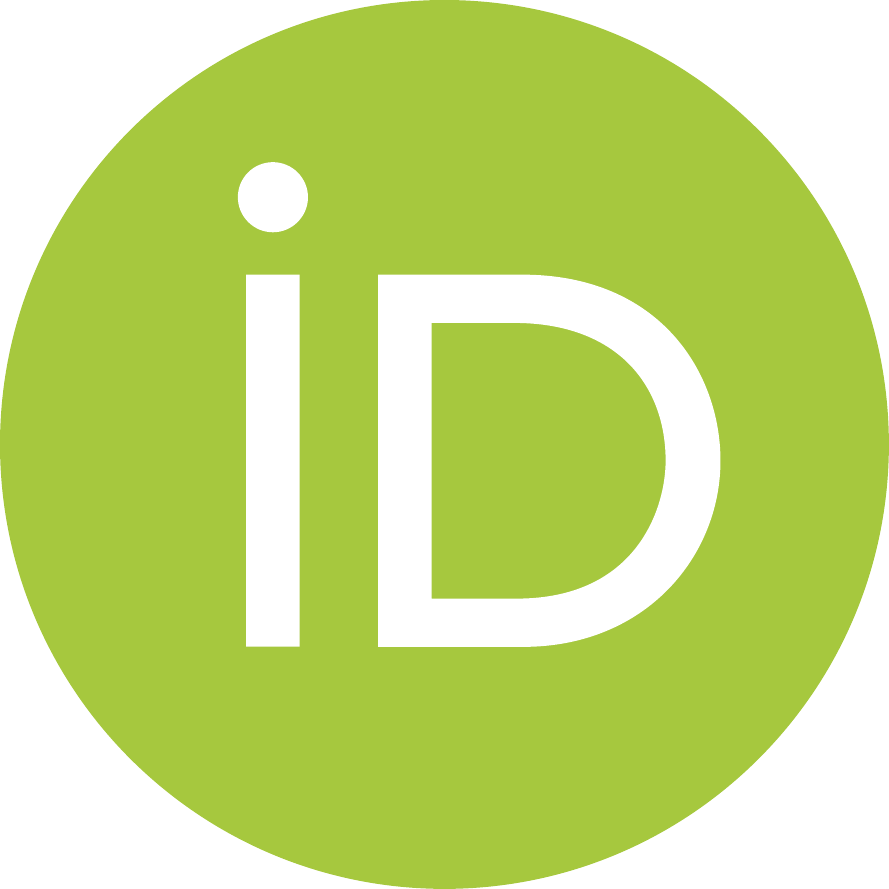}}}
\newcommand{\homepage}[1]{\href{#1}{\color{gray}\faHome}}
\crefname{line}{\text{line}}{\text{lines}} %
\newcommand{\eg}{e.g.,\xspace}
\newcommand{\ie}{i.e.,\xspace}
\newenvironment{ienumerate}
	{\begin{inparaenum}[\itshape i\upshape)]}
	{\end{inparaenum}}
\newenvironment{oneenumerate}
	{\ifdefined\VersionLong\begin{enumerate}\else\begin{inparaenum}[1)]\fi}
	{\ifdefined\VersionLong\end{enumerate}\else\end{inparaenum}\fi}
\newcommand{\card}[1]{\ensuremath{\left|#1\right|}}
\def\BState{\State\hskip-\ALG@thistlm}
\tikzset{
     redblock/.style={rectangle, fill=red!40, text width=2em,
                   text centered, rounded corners, minimum height=1em},
     state/.style={circle, text width=2em,
                   text centered, minimum height=2em},
     basic/.style  = {draw, text width=2cm,  font=\sffamily, rectangle},
     arrow/.style={-{Stealth[]}}
     }
\newcommand{\transition} [3]{{#1}\xrightarrow{#2}{#3}} 
\newcommand{\loc}{\ensuremath{q}}
\newcommand{\Loc}{\ensuremath{Q}}
\newcommand{\smartpar}[1]{\medskip \noindent \textbf{#1.}}
\newcommand{\gclock}{\ensuremath{t}}
\newcommand{\minval}[2]{\ensuremath{#1_{min}(#2)}\xspace}
\newcommand{\minvalnet}[3]{\ensuremath{#2_{min}^{#1}(#3)}\xspace}
\newcommand{\totaltime}{\ensuremath{\delta}}
\newcommand{\vmin}{\ensuremath{v_{min}}}
\newcommand{\sguard}[1]{\ensuremath{\mathit{lguard}(#1)}}
\newcommand{\LocGuards}{\ensuremath{\mathit{Guards}}}
\newcommand{\RGuards}{\ensuremath{\mathit{RGuards}}}
\newcommand{\ReachLoc}[1]{\ensuremath{\mathit{ReachL}(#1)}} %
\newcommand{\UG}[1]{\ensuremath{\mathit{UG}(#1)}} %
\newcommand{\invariant}{\ensuremath{\mathit{Inv}}}
\newcommand{\intervalPr}{\ensuremath{\mathcal{I}}}
\newcommand{\lcomputation}{\ensuremath{\rho}}
\newcommand{\gcomputation}{\ensuremath{\pi}}
\newcommand{\Lg}{\mathcal{L}}
\newcommand{\mQz}{\ensuremath{\mathbb{R}_{\geq 0}}}
\newcommand{\Nats}{\ensuremath{\mathbb{N}}}
\newcommand{\Ints}{\ensuremath{\mathbb{Z}}}
\newcommand{\famTemplates}{\ensuremath{\mathcal{A}}}
\newcommand{\gta}{gTA\xspace}
\newcommand{\Au}{\ensuremath{A'}\xspace}
\newcommand{\summary}[1]{\hat{#1}}
\newcommand{\sympath}{\ensuremath{\overline{\lcomputation}}}
\newcommand{\ZoneGraph}{\ensuremath{ZG}}
\newcommand{\zgDTN}{\ensuremath{\ZoneGraph^\infty(A')}\xspace}
\newcommand{\bzgDTN}{\ensuremath{\ZoneGraph_{\UpperBound(A)}^\infty(A')}\xspace}
\newcommand{\zone}{\ensuremath{z}}
\newcommand{\asap}{\ensuremath{\lcomputation_\mathsf{asap}}}
\newcommand{\finalq}{\ensuremath{\mathsf{lastloc}}}
\newcommand{\finalc}{\ensuremath{\mathsf{lastconf}}}
\newcommand{\firstq}{\ensuremath{\mathsf{firstloc}}}
\newcommand{\minreach}{\textsc{Minreach}\xspace}
\newcommand{\reach}{\ensuremath{\mathsf{reach}}}
\newcommand{\reachorder}{\ensuremath{\leq_\reach}}
\newcommand{\UpperBound}{\ensuremath{\mathit{UB}}}
\newcommand{\timeBound}{\ensuremath{B}}
\newcommand{\minReachT}[1]{\ensuremath{\minval{\delta}{#1}}} %
\newcommand{\niMinReachT}[2]{\ensuremath{\minvalnet{#1}{\delta}{#2}}} %
\newcommand{\nInfMinReachT}[1]{\ensuremath{\minvalnet{\infty}{\delta}{#1}}} %
\newcommand{\DTNpersistent}{\DTN{}$^-$\xspace}
\newcommand{\DTN}{DTN\xspace}
\newcommand{\Uppaal}{\textsc{Uppaal}}
\newcommand{\nConfig}{\ensuremath{\mathfrak{c}}} %
\newcommand{\nConfigSet}{\ensuremath{\mathfrak{C}}}
\newcommand{\nConfigInit}{\ensuremath{\hat{\nConfig}}} %
\newcommand{\nDisTrans}[1]{\ensuremath{(i_{#1},\tau_{#1})}} %
 \newcommand{\psione}{q_0 \xrightarrow[]{\tau_0}\ldots \xrightarrow[]{\tau_{i}}q_{i+1}}
 \newcommand{\psitwo}{q_{i+1}\xrightarrow[]{\tau_{i+1}} \ldots \xrightarrow[]{\tau_j} q_{j+1}}
 \newcommand{\psithree}{q_{j+1} \xrightarrow[]{\tau_{j+1}} \ldots \xrightarrow[]{\tau_{l-1}} q_l}
\newcommand{\ourloop}{(q_0,u_0) \xrightarrow{\delta_{0},\tau_{0}} \ldots \xrightarrow{\delta_{l-1},\tau_{l-1}} (q_l,u_l)}
\newcommand{\assign}{\leftarrow}
\newcommand{\clock}{\ensuremath{x}}
\newcommand{\clocks}{\ensuremath{\mathsf{C}}}
\newcommand{\clockcons}{\ensuremath{\mathcal{CC}(\clocks)}}
\newcommand{\ccons}{\ensuremath{\varphi}}
\newcommand{\intconstant}{\ensuremath{d}}
\NewDocumentCommand{\mynote}{+O{}+m}{%
  \begingroup
  \tcbset{%
    noteshift/.store in=\mynote@shift,
    noteshift=1.5cm
  }
   \begin{tcolorbox}[nobeforeafter,
    enhanced,
    sharp corners,
    toprule=1pt,
    bottomrule=1pt,
    leftrule=0pt,
    rightrule=0pt,
    colback=yellow!20,
    #1,
    left skip=\mynote@shift,
    right skip=\mynote@shift,
    overlay={\node[right] (mynotenode) at ([xshift=-\mynote@shift]frame.west) {\textbf{Note:}} ;},
    ]
    #2
  \end{tcolorbox}
  \endgroup
  }
\tikzstyle{pta}=[auto, ->, >=stealth']
\tikzstyle{every node}=[initial text=]
\tikzstyle{location}=[rectangle, rounded corners, minimum size=12pt, draw=black, fill=blue!10, inner sep=3.5pt]
\tikzstyle{invariant}=[draw=black, dotted, fill=yellow, inner sep=1pt, node distance=0] %
\tikzstyle{locguard}=[fill=cyan!30, inner sep=1pt, node distance=0] %
\tikzstyle{final}=[double, fill=blue!50]
\tikzstyle{urgent}=[fill=yellow, thick, dotted] %
\tikzstyle{private}=[fill=red,thick]
\newcounter{sarrow}
\newcommand{\sk}[1]{}
\newcommand{\remove}[1]{}
\newcommand{\remove}[1]{{\color{red}\st{#1}}}
\title{Parameterized Verification of\\ Disjunctive Timed Networks\thanks{%
	\ifdefined\VersionForArXiV
		This is the author (and extended) version of the manuscript of the same name published in the proceedings of the 25th International Conference on Verification, Model Checking, and Abstract Interpretation (\href{https://popl24.sigplan.org/home/VMCAI-2024}{VMCAI~2024}).
		The final authenticated version is available online at:
			\href{https://doi.org/10.1007/978-3-031-50524-9_6}{\nolinkurl{https://doi.org/10.1007/978-3-031-50524-9_6}}.
	\fi{}
	This work is partially supported by ANR-NRF ProMiS (ANR-19-CE25-0015 / 2019 ANR NRF 0092) and by ANR BisoUS (ANR-22-CE48-0012).
}}
\titlerunning{Parameterized Verification of Disjunctive Timed Networks}
\author{
	\'Etienne Andr\'e\inst{1}
		\homepage{https://lipn.univ-paris13.fr/~andre/}%
		\orcidID{https://orcid.org/0000-0001-8473-9555}
		\and
	Paul Eichler\inst{2}
	\orcidID{https://orcid.org/0009-0008-6117-318X}
	\and
	Swen Jacobs\inst{2}
		\homepage{https://swenjacobs.github.io/}%
		\orcidID{https://orcid.org/0000-0002-9051-4050}
		\and
	Shyam Lal Karra\inst{2}
	\orcidID{https://orcid.org/0009-0000-6859-4106}
}
\authorrunning{É. André, P. Eichler, S. Jacobs and S. Karra} 
\institute{%
	Université Sorbonne Paris Nord, LIPN, CNRS UMR 7030, \\ F-93430 Villetaneuse, France
	\and
	CISPA Helmholtz Center for Information Security,\\ Germany%
}
\begin{document}
	
	\maketitle

	\begin{abstract}
		We introduce new techniques for the parameterized verification of disjunctive timed networks (\DTN{}s), i.e., networks of timed automata (TAs) that communicate via \emph{location guards} that enable a transition only if there is another process in a given location.
		This computational model has been considered in the literature before, example applications are gossiping clock synchronization protocols or planning problems.
		We address the minimum-time reachability problem (\minreach) in \DTN{}s, and show how to efficiently solve it based on a novel zone-graph algorithm. We further show that solving \minreach allows us to construct a “summary” TA capturing exactly the possible behaviors of a single TA within a \DTN{} of arbitrary size. The combination of these two results enables the parameterized verification of \DTN{}s, while avoiding the construction of an exponential-size cutoff system required by existing results.
		Additionally, we develop sufficient conditions for solving \minreach and parameterized verification problems even in certain cases where locations that appear in location guards can have clock invariants, a case that has usually been excluded in the literature.
		Our techniques are also implemented, and experiments show their practicality.

		\keywords{Networks of Timed Automata \and Parameterized Verification \and Cutoffs \and Minimum-time Reachability}
	\end{abstract}

	\section{Introduction}
\label{sec:intro}

Many computer systems today are distributed and rely on some form of synchronization between largely independent processes to reach a common goal. 
Formally reasoning about the correctness of such systems is difficult, since correctness guarantees are expected to hold regardless of the number of processes ---a problem that is also known as \emph{parameterized verification}~\cite{AD16,DBLP:reference/mc/AbdullaST18}, since the number of processes is considered as a parameter of the system.
Parameterized verification is undecidable even in very restricted settings, \eg{} for safety properties of systems composed of finite-state processes with rather limited communication, such as token-passing or transition guards~\cite{DBLP:journals/ipl/Suzuki88,DBLP:conf/cade/EmersonK00}.
However, many classes of systems and properties have been identified that have decidable parameterized verification problems~\cite{DBLP:series/synthesis/2015Bloem,DBLP:journals/dc/AminofKRSV18,DBLP:journals/dc/EsparzaJRW21}, usually with finite-state processes.

Systems and properties that involve timing constraints, such as clock synchronization algorithms or planning of time-critical tasks, cannot be adequately modeled with finite-state processes.
A natural model for such processes are timed automata (TAs)~\cite{DBLP:journals/tcs/AlurD94}, and the parameterized verification of systems of TAs, also called \emph{timed networks}, has already received some attention in the literature.
For models with powerful communication primitives, such as $k$-process synchronization or broadcast, safety properties are decidable if every process has a single clock, but they are undecidable if processes have multiple clocks, and liveness properties are undecidable regardless of the number of clocks~\cite{DBLP:journals/tcs/AbdullaJ03,DBLP:conf/lics/AbdullaDM04,ADRST16}.

In \emph{disjunctive timed networks} (\DTN{}s), communication is restricted to disjunctive guards that enable a transition only if at least one process is in a given location.
{This a commonly met communication primitive in the literature, having been studied for finite-state systems under the notion of guarded protocols~\cite{DBLP:conf/cade/EmersonK00,DBLP:conf/lics/EmersonK03,DBLP:conf/vmcai/AusserlechnerJK16,DBLP:conf/vmcai/JacobsS18}, and more recently in immediate observation Petri nets~\cite{DBLP:conf/apn/EsparzaRW19,DBLP:conf/concur/RaskinWE20} and immediate observation population protocols~\cite{DBLP:conf/concur/EsparzaGMW18}.}
For \DTN{}s, decidability is obtained also for multiple clocks per process and for liveness properties~\cite{SpalazziS20}. 

However, the existing results on timed networks have no or very limited support for \emph{location invariants} (which can force an action to happen before a time limit runs out), and the decidability result for \DTN{}s relies on the construction of a product system that can be hard to check in practice.
Moreover, to the best of our knowledge, no techniques exist for the computation of real-time properties such as minimum-time reachability in timed networks (for single TAs, see, \eg{}~\cite{ATP04,BDR08,ZNL16}).

In this paper, we show that minimum-time reachability can be effectively computed in \DTN{}s, which also leads to a more efficient parameterized verification technique for safety and liveness properties, and we provide conditions under which location invariants can be supported.
\DTN{}s are an interesting computational model, since (even without clock invariants) they can express classes of models where a resource, encoded as a location, is produced or unlocked once for all; this is the case of the whole class of planning problems, and some problems in chemical reaction networks.
Moreover, disjunctive networks have been used to verify security problems in grid computing~\cite{DBLP:journals/fgcs/PagliarecciSS13}, and real-time properties are also of interest in such applications~\cite{MBS11}.
Another natural application area are clock synchronization protocols, which we consider in the following.

\smartpar{Motivating Example: Gossiping Clock Synchronization}
Consider the example in \cref{fig:gtpProtocol}, depicting a simple clock synchronization protocol.
The semantics of a single process is the same as for standard TAs.
Invariants are depicted using dotted yellow boxes; $\clock \assign 0$ denotes reset of clock x.
As a synchronization mechanism, some transitions are \emph{guarded} by a location~$h_i$ (such location guards are highlighted in light blue), \ie{} they may only be taken by one process if another process is in~$h_i$.
Arrows marked with $h_0,h_1$ stand for two transitions, each guarded by one of the~$h_i$.

Processes are synchronized when they are in a location $h_i$ for $i \in \{0,1\}$, and they should move to $h_{(i+1 \mod 2)}$ after two time units. 
However, they can non-deterministically lose synchronization and move to a location $\ell_i$, where they move according to different timing constraints.
Via $\ell_{sy}$, they can re-synchronize with processes that are still in a location~$h_i$.

A version of this example was considered by Spalazzi and Spegni~\cite{SpalazziS20}. 
Notably, their version came \emph{without} clock invariants on locations $h_i$, \ie{} processes can stay in these locations forever, which is clearly not the intended behavior.
Besides not supporting location invariants, their parameterized verification\LongVersion{ technique} requires the construction of a product system, which is fine for this small example but quickly becomes impractical for bigger examples.

\begin{figure}[tb]
  \scalebox{1}{
			    \begin{tikzpicture}[font=\footnotesize]
        \begin{scope}[rounded corners]
         \node [location] (q2) at (3,0) { $h_0$ };
         \node [location] (q3) at (7,0) { $h_1$ };
          \node [location] (q4) at (0, -2.5) { $\ell_{sy}$ };
         \node [location] (q5) at (3,-2.5) { $\ell_0$ };
         \node [location] (q6) at (7,-2.5) { $\ell_1$ };
         
         \node [ invariant,above =of q2]  {$\clock \leq 2$};
         \node [ invariant,above =of q3]  {$\clock \leq 2$};
         \node [ invariant,below =of q5]  {$\clock \leq 4$};
         \node [ invariant,below =of q6]  {$\clock \leq 4$};
         \node (fake) at (2,0){};
        \end{scope}
         \begin{scope}[->, >=stealth, every node/.style={scale=0.9}]

         \draw (fake)--(q2);

        \path[] (q4) edge[]    (q5);
         \path[] (q5) edge[bend left]    (q4);

         \path[] (q2) edge[]  node[above]{$\clock=2,\clock \assign 0$}  (q3);
          \path[] (q4) edge[sloped]  node[above]{$\clock\assign 0$}node[below, locguard]{$h_0$}  (q2);
           \path[] (q4) edge[sloped]  node[above, pos = 0.25]{$\clock \assign 0$}node[below, locguard, pos=0.25]{$h_1$}  (q3);
            \path[] (q2) edge[sloped]  node[above, pos=0.75]{$\clock \neq 2, \clock \assign 0$} node[below,locguard, pos=0.75]{$h_0,h_1$}  (q6);
            \path[] (q3) edge[sloped]  node[above, pos=0.75]{$\clock \neq 2, \clock \assign 0$} node[below, locguard,pos=0.75]{$h_0,h_1$}  (q5);
             \path[] (q4) edge[]    (q5);
             \path[] (q5) edge[]  node[above]{$\clock \geq 1, \clock \assign 0$}   (q6);
             \path[] (q6) edge[bend left]  node[below]{$\clock \geq 1,\clock \assign 0$}   (q5);
             \path[] (q3) edge[bend right]  node[above]{$\clock = 2, \clock \assign 0$}   (q2);
        \end{scope}
        \end{tikzpicture}
 }
\caption{TA with disjunctive guards for gossiping clock synchronization}%
  \label{fig:gtpProtocol}
  \end{figure}

\smartpar{Contributions}
In this paper, we provide novel decidability results for \DTN{}s:
\begin{enumerate}
    \setlength{\itemsep}{0pt}
    \setlength{\parskip}{0pt}
    \setlength{\parsep}{0pt}
    \item We state the \emph{minimum-time reachability problem} (\minreach) for \DTN{}s, \ie{} computing the minimal time needed to reach any location of a TA, in networks with an arbitrary number of processes.
		We develop a technique to efficiently solve the problem for TAs without location invariants, based on a specialized zone-graph computation.

		\item We show that solving \minreach allows us to construct a \emph{summary automaton} that captures the local semantics of a process, \ie{} has the same set of possible executions as any single process in the \DTN. This allows us to decide parameterized verification problems without resorting to the global semantics in a product system that grows exponentially with the cutoff. 
		\item We show that under certain conditions, parameterized verification is still decidable even in the presence of location invariants for 1-clock TAs.
		We develop new domain-specific proof techniques that allow us to compute summary automata and cutoffs for these cases.
    \item We experimentally evaluate our techniques on variants of the clock synchronization case study and a hand-crafted example, demonstrating its practicality and its benefits compared to the cutoff-based approach. 
\end{enumerate}

\smartpar{Related work}
Infinite-state systems have been of high interest to the verification community. 
Their infiniteness can stem from two causes. 
First,  the state space of a single process can be infinite, as in TAs~\cite{DBLP:journals/tcs/AlurD94}. 
In order to obtain decidability results (for single TAs), abstraction techniques have been proposed in the literature (\eg{}~\cite{BehrmannBLP06,HSW13,BGHSS22}).
Second, infinite-state systems (effectively) arise if we consider parameterized systems that consist of an arbitrary number of components.
Apt and Kozen~\cite{DBLP:journals/ipl/AptK86} showed that reachability is undecidable in systems composed of an arbitrary number of finite-state systems, and Suzuki~\cite{DBLP:journals/ipl/Suzuki88} showed undecidability even for uniform finite-state components with very limited means of communication.
However, the parameterized verification problem remains decidable for many interesting classes of systems with finite-state processes~\cite{DBLP:conf/cade/EmersonK00,DBLP:journals/ijfcs/EmersonN03,DBLP:series/synthesis/2015Bloem,DBLP:conf/concur/ClarkeTTV04,DBLP:journals/fmsd/BouajjaniHV08,DBLP:conf/icfem/HannaSBR10,DBLP:journals/dc/AminofKRSV18}.

According to this classification, the parameterized verification of networks of TAs deals with infiniteness in two dimensions~\cite{DBLP:journals/tcs/AbdullaJ03,DBLP:conf/lics/AbdullaDM04}, and similarly the verification of timed Petri nets~\cite{DBLP:journals/tcs/JonesLL77}.
Here, the notion of urgency (which is closely related to location invariants) makes the reachability problem undecidable, even for the case where each TA is restricted to a single clock.
Our setting also shares similarities with timed \emph{ad hoc} networks~\cite{ADRST16}, where processes are TAs that communicate by broadcast in a given topology.
However, even simple problems like parameterized reachability quickly become undecidable~\cite{ADRST16}, \eg{} when processes contain two clocks if the topology is a clique (\ie{} all nodes can communicate with each other), or even when the nodes are equipped with a single clock if the topology is a star of diameter five.
In our setting, the topology is fixed to a clique, and communication is not based on actions, but on location guards.
In addition, \cite{DBLP:journals/tcs/AbdullaJ03,ADRST16} can only decide reachability properties and %
do not seem to be usable to derive minimum-time reachability results.

As the closest work to ours, Spalazzi and Spegni~\cite{SpalazziS20} consider networks of TAs with disjunctive guards, and use the cutoff method to obtain decidability results.
They show that, while cutoffs do not always exist, they do exist for a subclass where the underlying TAs do not have invariants on locations that appear in guards.
In our work, we show how to decide parameterized verification problems by constructing a summary automaton instead of an exponentially larger cutoff system, and we present new decidability results that address the limitation on location invariants.

A similar idea to our summary automaton for parameterized verification of systems communicating via pairwise rendez-vous has also appeared in \cite{DBLP:journals/dc/AminofKRSV18}, but in their case this is a Büchi automaton, whereas we need to use a TA. %
\LongVersion{%
Finally, \cite{MBS11} defines a timed calculus for timed systems, sharing similarities with~\cite{DBLP:conf/formats/AbdullaDRST11}.
}

Parameterized verification \emph{with timing parameters} was considered in~\cite{ADFL19}, with mostly undecidability results\LongVersion{, except for restricted subclasses}; in addition, the few decidability results only concern \emph{emptiness} of parameter valuations sets, and not synthesis, and therefore these parametric results cannot be used to encode a minimum-time reachability using a parameter minimization.

\smartpar{Outline}
\cref{section:preliminaries} recalls TAs and \DTN{}s, and states the parameterized verification problem.
\cref{section:algorithm:DTN-} introduces a technique for efficient parameterized verification of \DTN{}s where locations that appear in location guards cannot have invariants, while \cref{sec:cutoffs} proposes another technique that supports such invariants under certain conditions.
\cref{section:evaluation} shows the practical evaluation of our algorithms on several examples, and
\cref{section:conclusion} concludes the paper. %

	\section{Preliminaries}\label{section:preliminaries}
We define here TAs with location guards (also: guarded TAs) and networks of TAs\LongVersion{(\cref{sec:model})}, followed by parameterized verification problems and cutoffs (\cref{sec:paramver}), and finally the standard symbolic semantics of TAs (\cref{sec:symbolic}).

\LongVersion{\subsection{System Model}}\label{sec:model}
Let $\clocks$ be a set of clock variables. A \emph{clock valuation} is a mapping $v: \clocks \rightarrow \mQz$.
We denote by~$\mathbf{0}$ the clock valuation that assigns $0$ to every clock, and by $v+\delta$ for $\delta \in \mQz$ the valuation such that $(v+\delta)(c)=v(c)+\delta$ for all $c \in \clocks$.
We call \emph{clock constraints} $\clockcons$ the terms of the following grammar:%

{\centering

$\ccons \Coloneqq \top \mid \ccons \wedge \ccons \mid c \sim c' + \intconstant \mid c \sim \intconstant \text{ with } \intconstant \in \mathbb{N}, \, c, c' \in \clocks, \, {\sim} \in \{<, \leq, =, \geq, > \}.$

}

A clock valuation $v$ is said to \emph{satisfy} a clock constraint $\ccons$, written as $v \models \ccons$, if $\ccons$ evaluates to true after replacing every $c \in \clocks$ with its value $v(c)$.%

\smartpar{Guarded Timed Automaton (\gta)}
A \emph{\gta} $A$ is a tuple $(\Loc,\hat{\loc},\clocks,\mathcal{T},\invariant)$:
\begin{itemize}
    \setlength{\itemsep}{0pt}
    \setlength{\parskip}{0pt}
    \setlength{\parsep}{0pt}
    \item $\Loc$ is a finite set of locations with \emph{initial location} $\hat{\loc}$,
    \item $\clocks$ is a finite set of clock variables,
    \item $\mathcal{T} \subseteq \Loc \times \clockcons \times 2^\clocks \times  \left( \Loc \cup \{\top\} \right) \times \Loc$ is a transition relation, and
    \item $\invariant: \Loc \rightarrow \clockcons$ assigns to every location $\loc$ an \emph{invariant} $\invariant(\loc)$.
\end{itemize}

Intuitively, a transition $\tau = (\loc,g,r,\gamma,\loc') \in \mathcal{T}$ takes the automaton from location $\loc$ to~$\loc'$, it can only be taken if \emph{clock guard} $g$ and \emph{location guard} $\gamma$ are both satisfied, and it resets all clocks in~$r$.
We also write $\sguard{\tau}$ for~$\gamma$.
Note that satisfaction of location guards is only meaningful in a \emph{network} of~TAs, formally defined subsequently.
Intuitively, a location guard is satisfied if it is $\top$ or if another automaton in the network currently occupies location~$\gamma$.
We say that $\gamma$ is \emph{trivial} if $\gamma = \top$, and we write $\LocGuards(A) \subseteq \Loc$ for the set of non-trivial location guards that appear in~$A$.\footnote{%
    Note that since $\LocGuards(A)$ is effectively a set of locations, we also use it as such.}
We call $\loc \in \LocGuards(A)$ also a \emph{guard location}.
We say that a location $\loc$ \emph{does not have an invariant} if $\invariant(\loc)=\top$.

\begin{example}
    
    \cref{figure:example-gTA} shows an example gTA with 5~locations and one clock~$\clock$.
		Location $\hat{\loc}$ is the initial location, with a transition to~$\loc_0$ that has a clock constraint $\clock=2$, and a transition to~$\loc_2$ that resets~$\clock$ (which in figures is denoted as $\clock \leftarrow 0$ for readability) and a location guard $\loc_1$.
		Location $\loc_0$ has an invariant %
		$x \leq 4$.
		The transition from~$\loc_0$ to~$\hat{\loc}$ has location guard~``$\loc_0$'', requiring another process to be in location~$\loc_0$ to take this transition.
\end{example}

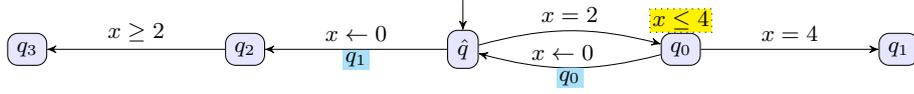
\begin{figure}[tb]
	\centering
	\resizebox{\linewidth}{!}{
	\begin{tikzpicture}[pta, bend angle=15, font=\footnotesize]

		\node[location] at (0, 0)  (q0) {$\hat{\loc}$};
		\node[location] at (3, 0) (q1) {$\loc_0$};
		\node[location] at (6, 0) (q2) {$\loc_1$};
		\node[location] at (-3, 0) (q3) {$\loc_2$};
    \node[location] at (-6,0) (q4) {$\loc_3$};

		\node[invariant, above=of q1] {$\clock \leq 4$};

		\node (fake) at (0,0.8){};
	
		\path (q0) edge[bend left] node[]{$\clock=2$} (q1);

		\path (q1) edge[bend left] node[above,xshift=-.3em]{$\clock \assign 0$} node[below,locguard]{$\loc_0$} (q0);
		\path (q1) edge[] node[above]{$\clock = 4$} (q2);

    \path (q0) edge[] node [above]{$\clock \assign 0$} node[below,locguard]{$q_1$}(q3);
    \draw (fake)--(q0);

    \path (q3) edge[] node[above]{$\clock \geq 2$} (q4);

	\end{tikzpicture}}
	\caption{A gTA example}
	\label{figure:example-gTA}
\end{figure}

Given a gTA~$A$, we denote by $\UG{A}$ the unguarded version of~$A$ \ie{} ignoring all location guards (or equivalently, replacing each location guard with $\top$).
Note that $\UG{A}$ is a~TA according to standard definitions in the literature~\cite{DBLP:journals/tcs/AlurD94}.

A \emph{configuration} of a gTA $A$ is a pair $(\loc,v)$, where $\loc \in \Loc$ and $v: \clocks \rightarrow \mQz$ is a \emph{clock valuation}.
When considering a \gta{}~$A$ in isolation, its semantics is the usual semantics of the TA~$\UG{A}$, \ie{} ignoring location guards.
That is, a \emph{delay transition} is of the form $(\loc,v) \rightarrow^\delta (\loc,v+\delta)$ for some {$\delta \in \mQz $} such that $\forall \delta' \in [0, \delta] : v+\delta' \models \invariant(\loc)$.
A \emph{discrete transition} is of the form $(\loc,v) \rightarrow^\tau (\loc',v')$, where $\tau=(\loc,g,r,\gamma,\loc') \in \mathcal{T}$, $v \models g$,
$v'[c]=0$ if $c \in r$ and $v[c]$ otherwise,
and $v' \models \invariant(\loc')$.
We say that the transition is \emph{based on $\tau$}.
We write $(\loc,v) \rightarrow (\loc',v')$ if either $(\loc,v) \rightarrow^\delta (\loc',v')$ or $(\loc,v) \rightarrow^\tau (\loc',v')$.
For convenience, we assume that for every location $\loc$ there is a discrete \emph{stuttering transition} $\epsilon = (\loc,\top,\emptyset,\top,\loc)$ that can always be taken and does not change the configuration.

We write $(\loc,v)\xrightarrow{\delta,\tau}(\loc',v')$ if there is a delay transition $(\loc,v) \rightarrow^\delta (\loc,v+\delta)$ followed by a discrete transition $(\loc,v+\delta) \rightarrow^\tau (\loc',v')$.
Then, a \emph{timed path} of~$A$ is a (finite or infinite) sequence\footnote{Wlog, we assume that the first transition of a timed path is a delay transition and that delay transitions and discrete transitions alternate.} $\lcomputation = (\loc_0,v_0) \xrightarrow{\delta_0,\tau_0} (\loc_0,v_0) \xrightarrow{\delta_1,\tau_1}  \ldots$.

For a finite timed path $\lcomputation = (\loc_0,v_0) \xrightarrow{\delta_0,\tau_0} \ldots \xrightarrow{\delta_{l-1},\tau_{l-1}} (\loc_l,v_l)$, let $\finalq(\lcomputation)=\loc_l$ be the final location of~$\lcomputation$, $\totaltime(\lcomputation) = \sum_{0 \leq i < l} \delta_i$ the total time delay of~$\lcomputation$.
We write $(\loc,v) \rightarrow^* (\loc',v')$ if there is a (finite) timed path $\lcomputation = (\loc_0,v_0) \xrightarrow{\delta_0,\tau_0} \ldots \xrightarrow{\delta_{l-1},\tau_{l-1}} (\loc_l,v_l)$.
A timed path is a \emph{computation} if $\loc_0=\hat{\loc}$ and $v_0=\mathbf{0}$.
The \emph{language} of~$A$, denoted $\Lg(A)$, is the set of all of its computations.

\begin{example}
Consider $\UG{A}$ for the gTA $A$ in \cref{figure:example-gTA}.
From its initial configuration $(\hat{\loc},\mathbf{0})$, there can be arbitrary delay transitions (since $\hat{\loc}$ does not have an invariant), and after a delay of $\delta \geq 0$ we can take a transition to $\loc_2$ which resets the clock $x$, \ie{} we arrive at configuration $(\loc_2,\mathbf{0})$. The location guard on this transition is ignored, since we consider $\UG{A}$.
In contrast, the transition from $\hat{\loc}$ to $\loc_0$ has a clock constraint that needs to be observed, \ie{} we can only take the transition after we reach a configuration $(\hat{\loc},v)$ with $v(x)=2$.
Then, the gTA can only stay in~$\loc_0$ as long as the invariant $x \leq 4$ is satisfied.
\end{example}

\smartpar{Network of TAs} For a given \gta $A$, we denote by $A^n$ the parallel composition $A \parallel \cdots \parallel A$ of $n$ copies of~$A$, also called a \textit{network of TAs} ($\mathrm{NTA}$ for short). A copy of~$A$ in the NTA will also be called a \emph{process}.

A \emph{configuration} $\nConfig$ of an NTA %
$A^n$ is a tuple
$\big((\loc_1,v_1), \ldots, (\loc_n,v_n)\big)$, where every $(\loc_i,v_i)$ is a configuration of~$A$.
The semantics of~$A^n$ can be defined as a \emph{timed transition system} $(\nConfigSet, \nConfigInit,T)$, where $\nConfigSet$ denotes the set of all  configurations of~$A^n$,  $\nConfigInit$ is the unique initial configuration $(\hat{\loc},\mathbf{0})^n$, and the transition relation $T$ is the union of the following delay and discrete transitions:

\begin{description}
    \setlength{\itemsep}{0pt}
    \setlength{\parskip}{0pt}
    \setlength{\parsep}{0pt}
    \item[delay transition] $ \transition {\big ((\loc_1,v_1),\ldots,(\loc_n,v_n) \big)} {\delta} {\big ((\loc_1,v_1+\delta),\ldots,(\loc_n,v_n+\delta)\big)}$\\
    if $\forall i \in [1,n] : \forall \delta' \leq \delta : v_i+\delta' \models \invariant(\loc_i)$, \ie{} we can delay $\delta \in \mQz$ units of time if all invariants $\invariant(\loc_i)$ will be satisfied until the end of the delay;
    \item[discrete transition] $\transition{\big((\loc_1,v_1),\ldots,(\loc_n,v_n)\big)} {(i, \tau)} {\big((\loc_1',v_1'),\ldots,(\loc_n',v_n')\big)}$ if
        \begin{enumerate}%
            \item $\transition{(\loc_i,v_i)} {\tau} {(\loc_i',v_i')}$ is a discrete transition of~$A$ with $\tau = (\loc_i,g,r,\gamma,\loc_i')$,
            \item $\gamma=\top$ or $\loc_j=\gamma$ for some $j \in [1,n]\setminus \{i\}$, and
            \item $\loc_j' = \loc_j$ and $v_j'=v_j$ for all $j \in [1,n]\setminus \{i\}$.
        \end{enumerate}%

\end{description}

That is, location guards $\gamma$ are interpreted as disjunctive guards: unless $\gamma=\top$, at least one of the other processes needs to occupy the location $\gamma$.

{We write $\nConfig \xrightarrow{\delta, (i,\tau)} \nConfig''$ for a delay transition $\nConfig \xrightarrow{\delta} \nConfig'$ followed by a discrete transition $\nConfig' \xrightarrow{(i,\tau)} \nConfig''$.
Then, a \emph{timed path} of~$A^n$ is a (finite or infinite) sequence $\pi = \nConfig_0 \xrightarrow{\delta_0, (i_0, \tau_0)} \nConfig_1 \xrightarrow{\delta_1, (i_1, \tau_1)} \cdots$.}
For a finite timed path $\pi = \nConfig_0 \xrightarrow{\delta_0,\nDisTrans{0}} \nConfig_1 \xrightarrow{\delta_1, \nDisTrans{1}} \cdots \xrightarrow{\delta_{l-1}, \nDisTrans{l-1}} \nConfig_l$, let $\finalc(\pi)= \nConfig_l$ be the final configuration of~$\pi$ and $\totaltime(\pi) = \sum_{0 \leq i < l} \delta_i$ the total time delay of~$\pi$. %
A timed path $\pi$ of~$A^n$ is a \emph{computation} if $\nConfig_0 = \nConfigInit$.
The \emph{language} of~$A^n$, denoted $\Lg(A^n)$, is the set of all of its computations.

We will also use \emph{projections} of these global objects onto subsets of the processes.
That is, if $\nConfig=\big((\loc_1,v_1),\ldots,(\loc_n,v_n)\big)$ and $\intervalPr = \{i_1, \ldots, i_k\} \subseteq [1,n]$, then $\nConfig|_{\intervalPr}$ is the tuple $\big((\loc_{i_1},v_{i_1}),\ldots,(\loc_{i_k},v_{i_k})\big)$, and we extend this notation in the natural way to computations $\pi|_{\intervalPr}$ and the language $\Lg_{|\intervalPr}(A^n)$.\footnote{%
    In particular, in the projection $\pi|_{\intervalPr}$ of a timed path $\pi$, any discrete transition of a process $j \notin \intervalPr$ is replaced by a stuttering transition $\epsilon$ of one of the processes $i \in \intervalPr$, and $\Lg_{|\intervalPr}(A^n)$ contains the projection for every possible choice of $i$.}
Similarly, for timed paths $\pi_1$ of $A^{n_1}$ and $\pi_2$ of $A^{n_2}$ we denote by $\pi_1 \parallel \pi_2$ their (non-necessarily unique) \emph{composition} into a timed path of $A^{n_1+n_2}$.
We write $\loc \in \nConfig$ if $\nConfig=\big((\loc_1,v_1),\ldots,(\loc_n,v_n)\big)$ and $\loc = \loc_i$ for some $i \in [1,n]$, and similarly $(\loc,v) \in \nConfig$.

We say that a location $\loc$ is \emph{reachable} in~$A^n$ if there exists a reachable configuration $\nConfig$ s.t.\ $\loc \in \nConfig$.
We denote by $\ReachLoc{A}$ the set of reachable locations in~$\UG{A}$, and by $\ReachLoc{A^n}$ the set of reachable locations in~$A^n$.

If $\pi$ is a timed path and $d \in \mQz$, then by $\pi^{\leq d}$ 
we denote the maximal prefix of~$\pi$ with $\totaltime(\pi^{\leq d})\leq d$%
, and similarly for timed paths $\lcomputation^{\leq d}$ of a single gTA.%
\footnote{Note that $\pi^{\leq d}$ is infinite if $\pi$ is infinite and has a total time delay smaller than $d$.}

\begin{example}
Consider a network with $2$ processes that execute the gTA in \cref{figure:example-gTA}.
When both processes are in the initial configuration $(\hat{\loc},\mathbf{0})$, the transition to $\loc_2$ is disabled because of the location guard $\loc_1$.
However, after a delay of $\delta=2$, one of them can move to $\loc_0$, and after another delay of $\delta=2$ from $\loc_0$ to $\loc_1$.
As $\loc_1$ is now occupied, the process that stayed in $\hat{\loc}$ can now take the transition to $\loc_2$.
\end{example}

\smartpar{Disjunctive Timed Network} %
A given \gta $A$ induces a \emph{disjunctive timed network} (\DTN{})\footnote{%
    We reuse terminology and abbreviations from~\cite{SpalazziS20}.
} $A^\infty$, defined as the following family of NTAs:
    $A^\infty = \{A^n: n \in \mathbb{N}_{>0}\}$. 
We define $\Lg(A^\infty) = \bigcup_{n \in \Nats} \Lg(A^n)$ and, for $\intervalPr = [1,i]$ with $i \in \Nats$, let $\Lg_{|\intervalPr}(A^\infty) = \bigcup_{n \geq i} \Lg_{|\intervalPr}(A^n)$.

We say that a gTA $A$ \emph{has persistent guard locations} if every location $\loc \in \LocGuards(A)$ has $\invariant(\loc)= \top$.
We denote by \DTNpersistent{} %
the class of disjunctive timed networks where $A$ has persistent guard locations.
That is, in a \DTNpersistent{} a location can only appear in a location guard if it does not have an invariant.

\subsection{Parameterized Verification Problems}\label{sec:paramver}
In this work, we are mostly interested in determining $\Lg_{|\intervalPr}(A^\infty)$ for some fixed finite $\intervalPr$, since this allows us to solve many parameterized verification problems.
This includes any local safety or liveness properties of a single process (with $\intervalPr=[1,1]$), as well as mutual exclusion properties (with $\intervalPr=[1,2]$) and variants of such properties for larger~$\intervalPr$.
Note that, even though we are interested in the language of a small number of processes, these processes still interact with an arbitrary number of identical processes in the network.\footnote{%
    One notable exception, \ie{} a parameterized verification problem that cannot be answered based on $\Lg_{|\intervalPr}(A^\infty)$, is the (global) deadlock detection problem, not considered here, or similar problems of simultaneous behavior of all processes.
    }

\smartpar{Cutoffs}
We call a \emph{family of gTAs} a collection $\famTemplates$ of gTAs, expressed in some formalism. 
E.g., let $\famTemplates_\top$ be the collection of all gTAs with persistent guard locations
and, for any~$k$, let $\famTemplates_k$ be the collection of all gTAs with $\left| \LocGuards(A) \right| \leq k$.%

Then, a \emph{cutoff} for a family of {TAs} $\famTemplates$ and a number of processes $m \in \mathbb{N}$ %
is a number $c \in \mathbb{N}$ such that for every $A \in \famTemplates$:
\begin{center}
$\Lg_{|[1,m]}(A^\infty) = \Lg_{|[1,m]}(A^c)$
\end{center}

Note that our definition of cutoffs requires language equality (for a fixed number $m$ of processes) between $A^c$ and $A^\infty$, which implies that $c$ is also a cutoff under other definitions from the literature which only require equivalence of~$A^c$ and $A^\infty$ wrt.\ certain logic fragments that define subsets of the language.

In particular, we can immediately state the following generalization of an existing result, which gives a cutoff for gTAs with $k$ locations used in guards, and without invariants on these locations. 
Let $\famTemplates_{\top,k} = \famTemplates_\top \cap \famTemplates_k$.

\begin{theorem}[follows from~\cite{SpalazziS20}]
\label{thm:Spalazzi}
For any $m \in \mathbb{N}$, $m+k$ is a cutoff for $\famTemplates_{\top,k}$.
\end{theorem}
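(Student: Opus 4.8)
The plan is to show the two inclusions of $\Lg_{|[1,m]}(A^\infty) = \Lg_{|[1,m]}(A^c)$ with $c = m+k$, for any $A \in \famTemplates_{\top,k}$. The inclusion $\Lg_{|[1,m]}(A^c) \subseteq \Lg_{|[1,m]}(A^\infty)$ is immediate from the definitions, since $A^c$ is one of the NTAs in the family $A^\infty$. The substance is the reverse inclusion: given a computation $\pi$ of $A^n$ for some $n \geq c$, we must produce a computation $\pi'$ of $A^{m+k}$ whose projection onto $[1,m]$ equals $\pi|_{[1,m]}$. First I would reduce to the case $n > m+k$ and exhibit a way to ``drop'' one superfluous process while preserving the projection onto $[1,m]$; iterating this brings $n$ down to $m+k$.

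The key idea is process selection: in $\pi$, the $m$ observed processes take discrete transitions whose location guards $\gamma \in \LocGuards(A)$ must, at each such step, be witnessed by \emph{some} process occupying $\gamma$. Since $|\LocGuards(A)| \leq k$, at most $k$ distinct guard locations are ever needed. The crucial point is that guard locations have no invariants (this is where $A \in \famTemplates_\top$ is used): a process that ever visits a guard location $\loc$ along its run in $\UG{A}$ can be made to \emph{stay} in $\loc$ forever once it first arrives, because $\invariant(\loc) = \top$ permits unbounded delay there, and because removing that process's later transitions only removes potential guard-witnesses, never creates new obligations. So I would: (i) for each guard location $\loc$ that is used as a witness at some step of $\pi|_{[1,m]}$, pick one process $p_\loc$ among the $n-m$ unobserved processes that reaches $\loc$ in $\pi$ no later than the first moment it is needed — such a process exists because the guard is satisfied in $\pi$, hence \emph{witnessed}, and $\ReachLoc{A^n}$-reachability of $\loc$ via the unobserved processes can be replayed in isolation (the unguarded reachability $\ReachLoc{A}$ suffices to route a fresh process to $\loc$, using already-present processes as earlier witnesses if needed, an inductive argument on the ``guard depth''); (ii) keep the $m$ observed processes, these $\leq k$ witness processes $p_\loc$ (parked permanently in $\loc$ after arrival), and discard all other unobserved processes by replacing their discrete transitions with stuttering $\epsilon$-moves. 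The resulting timed path uses $\leq m+k$ processes; pad with additional copies idling in $\hat\loc$ if strictly fewer are used (legal since $\hat\loc$ need have no invariant obstacle to staying, or more carefully: keep exactly $m+k$ by not discarding down past that bound). One verifies that every discrete transition of the retained processes is still enabled: clock guards and invariants are local and unchanged, and every location guard invoked by the $m$ observed processes is now witnessed by the appropriate parked $p_\loc$.

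The main obstacle is step (i): establishing that the $\leq k$ witness processes can be brought to their guard locations \emph{and kept there} without themselves needing unavailable witnesses, and without their early routing behavior disturbing the projection onto $[1,m]$ or the timing. This is handled by an induction analogous to the cutoff argument for finite-state disjunctive (guarded) protocols~\cite{DBLP:conf/cade/EmersonK00,DBLP:conf/vmcai/AusserlechnerJK16}: order guard locations by the earliest time they are needed in $\pi$; to route $p_\loc$ to $\loc$, use a prefix of some process's run in $\pi$ that first enters $\loc$, and note that all location guards along that prefix are themselves witnessed in $\pi$ before time $\totaltime$ at which $\loc$ is needed, hence by already-established parked witnesses or by the observed processes — the no-invariant condition on guard locations is what lets each such intermediate witness linger long enough. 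Since guards nest only within the finite set $\LocGuards(A)$ of size $\leq k$, this induction terminates, and the delays needed for routing can be absorbed into the (already present) delay transitions of $\pi$ without changing $\totaltime$ of any prefix or the sequence of configurations seen by $[1,m]$. Composing these routed prefixes in parallel with $\pi|_{[1,m] \cup \{\text{witnesses}\}}$ via the $\parallel$ operation yields the desired computation of $A^{m+k}$.
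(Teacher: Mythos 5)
Your argument is correct and is essentially the intended one: the paper gives no proof of this theorem (it is imported from the cited work of Spalazzi and Spegni, generalized to $k$ guard locations), and the justification is exactly the ``flooding'' construction you describe --- route at most $k$ fresh witness processes to the used guard locations by replaying, with unchanged timing, prefixes of runs from the original computation, park them there forever (possible precisely because guard locations have trivial invariants), and order the guard locations by first-reachability to make the induction well-founded; this is the same construction the paper itself invokes later under the name \emph{flooding} and in its Lemma~\ref{lemma:locationguards-ordered}. Your parenthetical about padding to exactly $m+k$ processes is the right instinct (idling in $\hat{\loc}$ may violate an invariant, but duplicating an existing process's run is always safe), so no gap remains.
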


\subsection{Symbolic Semantics of Timed Automata}
\label{sec:symbolic}

We build on the standard zone-based symbolic semantics of TAs~\cite{BY03}. In the following, let $A = (\Loc,\hat{\loc},\clocks,\mathcal{T},\invariant)$ be a gTA, interpreted as a standard TA.

\LongVersion{\smartpar{Zones and Zone Operations}}
A \emph{zone} is a clock constraint $\ccons$ (as defined in \cref*{sec:model}), representing all clock valuations that satisfy $\ccons$.
In the following we will use the constraint notation and the set (of clock valuations) notation interchangeably. %
We denote by $\mathcal{Z}$ the set of all zones.

A \emph{symbolic configuration} of~$A$ is a pair $(\loc,\zone)$, where $\loc \in \Loc$ and $\zone$ is a zone over\LongVersion{ the clock variables}~$\clocks$.
For a given symbolic configuration $(\loc,\zone)$, let $\zone^\uparrow = \{u+\delta \mid u \in \zone, \delta \in \mQz\}$, and $(\loc,\zone)^\uparrow = \big(\loc,\zone^\uparrow \cap \invariant(\loc) \big)$.
For a zone $\zone$ %
and a clock constraint~$g$, let $\zone \wedge g= \{ v \mid v \in \zone$ and $ v \models g \}$.

\smartpar{Zone Graph}
The \emph{zone graph} $ZG(A)= (S,S_0,\mathit{Act},\Rightarrow)$ of~$A$ is a transition system where
\begin{ienumerate}
\item $S$ is the (infinite) set of nodes of the form $(\loc,\zone)$ where $\loc \in \Loc$ and $\zone$ is a zone;
\item The initial node is $S_0 = (\hat{\loc},\mathbf{0})^\uparrow$;
\item For any two nodes $(\loc,\zone)$ and $(\loc',\zone')$, there is a \emph{symbolic transition}, $(\loc,\zone) \xRightarrow{\tau} (\loc',\zone')$ if there exists a transition $\tau=(\loc,g,r,\gamma,\loc') \in \mathcal{T}$ such that $(\loc',\zone') = \big(\loc', \{ v' \in \mathbb{R}^X_{\geq 0} \mid \exists v \in \zone \colon (\loc,v) \rightarrow^{\tau} (\loc',v')\}\big)^\uparrow$ and $\zone' \neq \emptyset$.
\item $\Rightarrow$ is the union of all $\xRightarrow{\tau}$, and $\Rightarrow^*$ is the transitive closure of~$\Rightarrow$.
\end{ienumerate}
A \emph{symbolic path} in a zone graph is a (finite or infinite) sequence of symbolic transitions $\sympath = (\loc_0,\zone_0)\xRightarrow{\tau_0}(\loc_1,\zone_1) \xRightarrow{\tau_1} \cdots $. %

	\section{Minimum-Time Reachability Algorithm for \DTNpersistent{}}\label{section:algorithm:DTN-}

In this section, we show how we can solve the \emph{minimum-time reachability problem} (\minreach), \ie{} determine reachability of every $\loc \in \Loc$ in a \DTNpersistent{} and compute minimal reachability times $\nInfMinReachT{\loc}$ for every reachable location $\loc$.
Solving \minreach is essential since, in a DTN with $A \in \famTemplates_\top$, the minimal reachability times completely determine when each disjunctive guard can be satisfied. 
We will show that this also allows us to determine $\Lg|_{[1,1]}(A^\infty)$ by ``filtering'' $\Lg(\UG{A})$ with respect to the $\nInfMinReachT{\loc}$-values.

Formally, we define%
\[\minReachT{\loc} = \min\Big(\big\{ d {\in \mQz} \mid \exists \lcomputation \in \Lg(\UG{A}) \text{ s.t.\ } \lcomputation^{\leq d} \text{ is finite and } \loc = \finalq(\lcomputation^{\leq d}) \big\}\Big)\text{,}\]
\noindent{}\ie{} the minimal global time to reach $\loc$ in~$\UG{A}$%
, and for $i \in \Nats \cup \{\infty\}$
\[\niMinReachT{i}{\loc} = \min\Big(\big\{ d {\in \mQz} \mid \exists \gcomputation \in \Lg(A^i) \text{ s.t. } \gcomputation^{\leq d} \text{ is finite  and } \loc \in \finalc(\gcomputation^{\leq d}) \big\}\Big)\text{,}\]
\noindent{}\ie{} the minimal global time such that one process reaches $\loc$ in a \DTN{} of size~$i$ (if $i \in \Nats$), or a network of any size (if $i=\infty$).

Then, \minreach is the problem of determining $\nInfMinReachT{\loc}$ for every $\loc \in \Loc$.

\begin{example}
{
  For the gTA in \cref{figure:example-gTA} we have $\minReachT{\loc_3}=2$, as the transition from $\hat{\loc}$ to~$\loc_2$ is immediately enabled, and we then need to wait $2$~time units for the transition from $\loc_2$ to~$\loc_3$ to be enabled.
  But we have $\niMinReachT{2}{\loc_3}=\nInfMinReachT{\loc_3}=6$, as one process needs to move to $\loc_1$ before another can take the transition to~$\loc_2$.
  Also note that in a network with one process, $\loc_3$ cannot be reached \ie{} $\niMinReachT{1}{\loc_3} = \infty$.
}

\end{example}

\begin{remark}\label{remark:infimum}
  Note that the \emph{minimum} may not always be defined, as the smallest reachable time may be an \emph{infimum} (\eg{} of the form $\niMinReachT{i}{\loc} > \intconstant$ for some $\intconstant \in \Nats$).
  To ease the exposé, we assume that the minimum is always defined, but all our constructions can be extended to work for both cases of minimum and infimum.
\end{remark}

A ``naive'' way to solve \minreach would be based on \cref{thm:Spalazzi}, our generalization of existing cutoff results for DTNs.\footnote{Minimum-time reachability cannot be expressed in their specification language, but it can be shown that their proof constructions preserve minimum-time reachability.}
I.e., we can consider the cutoff system $A^{1+\card{\LocGuards(A)}}$ and determine $\niMinReachT{1+\card{\LocGuards(A)}}{\loc}$ with standard techniques for TAs~\cite{CY92}, but this approach does not scale well in the size of $\LocGuards(A)$, as each additional clock (in the product system) may result in an exponentially larger zone graph~\cite{HS10,HSW12}.
Instead, we will show how to solve \minreach more efficiently, based on a specialized zone-graph construction that works on a single copy of the given gTA, and determines minimal reachability times sequentially.

To this end, note that all reachable locations can be totally ordered based on the order in which they can be reached. 
Formally, define ${\reachorder} \subseteq \ReachLoc{A^\infty} \times \ReachLoc{A^\infty}$ as follows: we have $\loc \reachorder \loc'$ if either
\begin{ienumerate}%
  \item $\niMinReachT{\infty}{\loc} < \niMinReachT{\infty}{\loc'}$, or
  \item $\niMinReachT{\infty}{\loc} = \niMinReachT{\infty}{\loc'}$ and there exists a computation $\gcomputation \in \Lg(A^\infty)$ that reaches $\loc$ before it reaches $\loc'$
    (\ie{} $\loc$ is the last location of a prefix of~$\gcomputation$ that does not contain~$\loc'$).
\end{ienumerate}%
Now consider $\RGuards(A)=\LocGuards(A) \cap \ReachLoc{A^\infty}$, the \emph{reachable guard locations} of $A$.
Then the following lemma states that, if $\{\loc_1, \ldots, \loc_n\}$ are the reachable guard locations of a \DTNpersistent{} ordered by $\reachorder$, then for every $i \in \{1, \ldots, n \}$ there is a computation of $A^i$ such that each process $j$ with $j \leq i$ reaches~$\loc_j$ at minimal time, and can stay there forever.

\begin{lemma}\label{lemma:locationguards-ordered}%
  Let $A$ be a \DTNpersistent{}.
  Let $\RGuards(A) =\{ \loc_1, \cdots, \loc_n \}$ such that $\loc_i \reachorder \loc_{i+1}$ for $i = 1, \ldots, n-1$.
  Then for every~$i \in \{1, \dots, n\}$, there exists a computation $\gcomputation \in \Lg(A^i)$ such that,
  for each $j \in \{1, \dots, i \}$, the projection $\lcomputation_j = \pi|_{[j,j]}$
  reaches $\loc_j$ at time $\nInfMinReachT{\loc_j}$, and stays there forever.
\end{lemma}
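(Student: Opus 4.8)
The plan is to proceed by induction on $i$. For the base case $i=1$: since $\loc_1 \in \RGuards(A) \subseteq \ReachLoc{A^\infty}$, there is some $n$ and a computation of $A^n$ reaching $\loc_1$; because $\loc_1$ is minimal wrt $\reachorder$ and $\niMinReachT{\infty}{\loc_1}$ is achieved, we want to extract a single-process witness. The key observation is that $\loc_1$ being the $\reachorder$-minimal reachable guard location means that along a witnessing computation, the \emph{first} guarded transition that is taken can only be enabled by a process sitting in some guard location that is reached no later than $\loc_1$ — but by minimality that location is $\loc_1$ itself, which is circular unless the first guarded transition happens at time $\niMinReachT{\infty}{\loc_1}$ or later. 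Hence the prefix of the witness up to reaching $\loc_1$ uses only \emph{unguarded} transitions, so it is already a computation of $\UG{A}$, i.e. of $A^1$. Thus $\niMinReachT{\infty}{\loc_1} = \minReachT{\loc_1}$ and one process alone realizes it. Since $\loc_1 \in \LocGuards(A)$ and $A$ has persistent guard locations, $\invariant(\loc_1)=\top$, so after reaching $\loc_1$ the process can delay forever; append an infinite suffix of delay (and stuttering) transitions to get the required $\gcomputation \in \Lg(A^1)$.

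For the inductive step, assume the statement for $i-1$, giving $\gcomputation' \in \Lg(A^{i-1})$ in which process $j$ reaches $\loc_j$ at time $\nInfMinReachT{\loc_j}$ and stays there forever, for $j \le i-1$. I want to build $\gcomputation \in \Lg(A^i)$ by adding one more process that reaches $\loc_i$ at time $\nInfMinReachT{\loc_i}$. Take any computation $\gcomputation^\ast$ of some $A^m$ that reaches $\loc_i$ at minimal time $\nInfMinReachT{\loc_i}$, and let $\lcomputation_i$ be the projection onto the process that is in $\loc_i$ at that moment — this $\lcomputation_i$ is a timed path of $\UG{A}$ reaching $\loc_i$ at time $\nInfMinReachT{\loc_i}$, but it may use guarded transitions. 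The crucial claim is that every guarded transition used by $\lcomputation_i$ before time $\nInfMinReachT{\loc_i}$ has a location guard $\gamma$ with $\gamma \in \RGuards(A)$ and $\gamma \reachorder \loc_i$ (equivalently $\gamma \in \{\loc_1,\dots,\loc_{i-1}\}$, or $\gamma = \loc_i$ with the transition at time exactly $\nInfMinReachT{\loc_i}$): indeed, $\gamma$ must be occupied by \emph{some} process at the time the transition fires, hence $\gamma$ is reachable in the network by that time, so $\niMinReachT{\infty}{\gamma} < \nInfMinReachT{\loc_i}$ or they are equal with $\gamma$ reached first — and since $\gamma$ is a guard location, $\gamma \in \RGuards(A)$ and the ordering $\reachorder$ places it among $\loc_1,\dots,\loc_i$. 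Then composing $\gcomputation'$ (which keeps a process permanently available in each of $\loc_1,\dots,\loc_{i-1}$, all having no invariant so they never need to move) with $\lcomputation_i$ via $\parallel$ supplies exactly the processes needed to witness every such guard: whenever $\lcomputation_i$ needs guard $\loc_j$ with $j<i$, process $j$ of $\gcomputation'$ is in $\loc_j$ by time $\nInfMinReachT{\loc_j} \le \nInfMinReachT{\loc_i}$ and stays; whenever it needs guard $\loc_i$ at the final instant, process $i$ itself is about to be (or momentarily is) there — this last sub-case needs a small separate argument, possibly reordering so the self-referential guarded step is not actually on the critical path, or observing $\lcomputation_i$ can be chosen to avoid guard $\loc_i$ before reaching $\loc_i$. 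Finally extend process $i$ past $\loc_i$ by delaying forever (using $\invariant(\loc_i)=\top$), and keep all other processes stuttering/delaying, yielding the desired $\gcomputation \in \Lg(A^i)$.

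The main obstacle I anticipate is making the ``guard $\gamma$ is reached no later than $\loc_i$'' argument fully rigorous, including the boundary case where $\lcomputation_i$ genuinely requires a process in $\loc_i$ (itself) to take a transition at the very moment $\nInfMinReachT{\loc_i}$; one must argue this can be avoided, or handled by a careful interleaving/timing argument showing two copies suffice or that the step can be deferred. A secondary subtlety is the definition of $\reachorder$ as a \emph{total} order on $\ReachLoc{A^\infty}$: I would need (or the paper would need to have established) that $\reachorder$ is indeed total and that ``reaches $\loc$ before $\loc'$'' is consistent, so that ``$\gamma$ is among $\loc_1,\dots,\loc_i$'' is a legitimate conclusion; with persistent guard locations this is plausible because once a process sits in a guard location it never leaves, so all guard locations can be populated simultaneously in increasing $\reachorder$-order, which is essentially what the induction is proving.
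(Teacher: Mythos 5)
Your proposal is correct and follows essentially the same route as the paper's proof: the base case uses $\reachorder$-minimality of $\loc_1$ to obtain a guard-free path executable by a single process, and the inductive step composes the flooding computation for $\{\loc_1,\dots,\loc_{i-1}\}$ with the local path of the process reaching $\loc_i$, whose guards all lie among the already-flooded locations. The boundary subtlety you flag (a guarded step needing $\loc_i$ itself) is resolved exactly as you suggest --- pick the \emph{first} process to enter $\loc_i$ in the witnessing computation, whose path cannot use $\loc_i$ as a guard --- which is what the paper implicitly relies on when it asserts that the transitions along $\gcomputation_j$ depend only on $\{\loc_1,\dots,\loc_{j-1}\}$.
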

\begin{proof}
  Fix $i \in \{1, \dots, n\}$.
  First consider $j = 1$.
      By definition of $\reachorder$, $\loc_1$ is a guard location that is reachable at the earliest time, and any guard location $\loc'$ with $\niMinReachT{\infty}{\loc'} = \niMinReachT{\infty}{\loc_1}$ can be reached on a computation that reaches $\loc_1$ first.
			Thus, there exists a computation $\lcomputation$ of~$\UG{A}$ reaching $\loc_1$ at time $\niMinReachT{\infty}{\loc_1}$ without passing any non-trivial location guard.
      Note that $\lcomputation$ is also a computation $\gcomputation$ of $A^1$ with the desired properties.
			
      Now, for any $1 < j \leq i$, we can assume that there is a computation $\gcomputation$ of $A^{j-1}$ where for each of the guard locations $\{ \loc_1, \dots, \loc_{j - 1} \}$ there is a process that reaches it in minimal time and stays there forever (which is possible because $A$ is a \DTNpersistent{}).
			By definition of $\niMinReachT{\infty}{\loc_j}$ and $\reachorder$, there exists a computation $\gcomputation_j$ of $A^m$, for some $m$, that reaches $\loc_j$ in minimal time, and the transitions along $\gcomputation_j$ only depend on guard locations in $\{ \loc_1, \dots, \loc_{j - 1} \}$.
			If $\lcomputation_j$ is the local computation of the process that reaches $\loc_j$ in $\gcomputation_j$, then $\gcomputation' = \gcomputation \parallel \lcomputation_j$ is a computation of $A^j$ with the desired properties. \qed
\end{proof}

The following lemma formalizes the connection between computations of~$A$ and~$A^\infty$, stating that a computation of~$A$ is also the projection of a computation of~$A^\infty$ if no transition with a location guard~$\loc$ is taken before $\loc$ can be reached.

\begin{restatable}{lemma}{deltagmin}
\label{lem:deltagmin}

Let $A \in \famTemplates_\top$. A computation $\lcomputation \in \UG{A}$ is in $\Lg|_{[1,1]}(A^\infty)$ iff for every prefix $\lcomputation'$ of $\lcomputation$ that ends with a discrete transition %
$(\loc,v) \rightarrow^\tau (\loc',v')$ with $\sguard{\tau}\neq \top$, we have $\nInfMinReachT{\sguard{\tau}} \leq \totaltime(\lcomputation')$.

\end{restatable}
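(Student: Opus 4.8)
The plan is to prove both directions directly. For the ($\Rightarrow$) direction, suppose $\lcomputation \in \Lg|_{[1,1]}(A^\infty)$, so there is some $n$ and a computation $\gcomputation \in \Lg(A^n)$ with $\lcomputation = \gcomputation|_{[1,1]}$. Consider any prefix $\lcomputation'$ of $\lcomputation$ ending in a discrete transition based on $\tau$ with $\sguard{\tau} = \loc \neq \top$. This prefix corresponds to a prefix $\gcomputation'$ of $\gcomputation$ with $\totaltime(\gcomputation') = \totaltime(\lcomputation')$ (projection preserves the time stamps, since delays are global). By the semantics of the discrete transition in $A^n$ (condition~2 of the discrete transition), some process $j \neq 1$ must occupy location $\loc$ at the end of $\gcomputation'$. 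Hence $\loc$ is reached in $A^n$ no later than time $\totaltime(\gcomputation')$, so $\nInfMinReachT{\loc} \leq \totaltime(\gcomputation') = \totaltime(\lcomputation')$, which is exactly the claim. This direction is essentially bookkeeping.

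For the ($\Leftarrow$) direction — the harder one — assume $\lcomputation$ is a computation of $\UG{A}$ satisfying the stated condition on all its guarded-transition prefixes; I must construct a network $A^n$ and a computation $\gcomputation$ with $\gcomputation|_{[1,1]} = \lcomputation$. The idea is to use $\lcomputation$ itself as the behavior of process~$1$, and to supply \emph{helper processes} that sit in the required guard locations at the right times. Concretely, let $\loc_1 \reachorder \cdots \reachorder \loc_n$ enumerate the reachable guard locations $\RGuards(A)$ that actually appear in the guards used along $\lcomputation$ (a subset of $\RGuards(A)$; we may as well take all of $\RGuards(A)$ to match the statement). Apply Lemma~\ref{lemma:locationguards-ordered} with $i = n$: this gives a computation $\gcomputation_H \in \Lg(A^n)$ in which, for each $j$, process $j$ reaches $\loc_j$ at time $\nInfMinReachT{\loc_j}$ and stays there forever. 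Now compose: take $\gcomputation = \lcomputation \parallel \gcomputation_H$, a timed path of $A^{n+1}$, where process~$0$ (say) follows $\lcomputation$ and processes $1, \dots, n$ follow $\gcomputation_H$. I must check this composition is a legal computation of $A^{n+1}$: delay transitions are always fine (invariants of $\UG{A}$ are the invariants of $A$, and helpers' invariants are respected in $\gcomputation_H$); stuttering lets the two sides interleave their discrete steps; and the only real obligation is that each discrete transition of process~$0$ based on some $\tau$ with $\sguard{\tau} = \loc_j \neq \top$ is enabled, i.e.\ some other process occupies $\loc_j$ at that moment. But that transition occurs at a time $t = \totaltime(\lcomputation')$ for the corresponding prefix $\lcomputation'$, and by hypothesis $\nInfMinReachT{\loc_j} \leq t$; since process~$j$ reaches $\loc_j$ at time $\nInfMinReachT{\loc_j} \leq t$ and stays there forever (persistence of guard locations is what makes this possible — $\loc_j$ has no invariant), process~$j$ is in $\loc_j$ at time $t$. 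Hence the guard is satisfied and $\gcomputation$ is a computation of $A^{n+1}$ with $\gcomputation|_{[1,1]} = \lcomputation$, so $\lcomputation \in \Lg|_{[1,1]}(A^\infty)$.

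The main obstacle is the careful interleaving argument in the ($\Leftarrow$) direction: one has to make the composition $\lcomputation \parallel \gcomputation_H$ precise as an alternation of global delay and discrete transitions, ensuring that whenever process~$0$ needs guard $\loc_j$, the helper~$j$ has \emph{already arrived and not left}. This is where $A \in \famTemplates_\top$ is essential — because $\loc_j$ carries no invariant, the helper can idle there indefinitely via delay transitions without being forced out, so the single witness time $\nInfMinReachT{\loc_j}$ suffices for \emph{all} later uses of the guard along $\lcomputation$. A minor technical point, handled exactly as in Remark~\ref{remark:infimum}, is the case where some $\nInfMinReachT{\loc_j}$ is an infimum rather than a minimum: then the helper is routed to reach $\loc_j$ at a time strictly below $t$ (possible because the hypothesis gives $\nInfMinReachT{\loc_j} \le t$ and, if equality holds with an infimum, one picks a slightly earlier arrival still consistent with the bound), and the rest of the argument is unchanged.
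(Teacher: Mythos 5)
Your proof is correct and follows essentially the same route as the paper's: the forward direction is the direct (rather than contrapositive) form of the paper's observation that a satisfied guard $\sguard{\tau}$ witnesses reachability of that location by time $\totaltime(\lcomputation')$, and the backward direction composes $\lcomputation$ with the flooding computation supplied by Lemma~\ref{lemma:locationguards-ordered}, exactly as in the paper. Your write-up is somewhat more explicit about the interleaving and the infimum caveat of Remark~\ref{remark:infimum}, but the underlying argument is identical.
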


\iffull
\begin{proof}
For the ``if'' direction, assume that every discrete transition $\tau$ in $\lcomputation$ happens at a time that is greater than $\nInfMinReachT{\sguard{\tau}}$. By \cref{lemma:locationguards-ordered} there exists a computation $\gcomputation$ of $A^{|\LocGuards(A)|}$ in which every $\loc \in \RGuards(A)$ is reached at time $\nInfMinReachT{\loc}$, and the process that reaches $\loc$ stays there forever.
Thus, we have $(\lcomputation \parallel \gcomputation) \in \Lg(A^\infty)$, and therefore $\lcomputation \in \Lg|_{[1,1]}(A^\infty)$.

The ``only if'' direction is simple: if $\lcomputation$ contains a discrete transition $(\loc ,v) \rightarrow^\tau (\loc ',v')$ that happens at a time strictly smaller than $\nInfMinReachT{\sguard{\tau}}$, then by definition no other local run can be in location $\sguard{\tau}$ at this time (in any computation of $A^\infty$), and therefore $\lcomputation$ cannot be in $\Lg|_{[1,1]}(A^\infty)$. \qed
\end{proof}

\fi

\begin{example}
Consider again the gTA in \cref{figure:example-gTA} (ignoring the location invariant of~$\loc_0$%
, as otherwise it is not a \DTNpersistent{}).
Considering the possible behaviors of a process in a \DTNpersistent{}, we can assume that transition $\hat{\loc} \rightarrow \loc_2$ is enabled whenever the global time is at least $\nInfMinReachT{\loc_1}$.
This is because we can assume that another process moves to that location on the minimum-time path (and stays there).
\end{example}

We will show subsequently that \cref{lem:deltagmin} allows us to solve \minreach by
\begin{enumerate}
\item considering a variant~$\Au$ of~$A$ with a global clock variable %
  never reset, and
\item applying a modified zone-graph
    algorithm on a \emph{single instance} of~$\Au$.
\end{enumerate}
Working on a single instance of~$\Au$ will lead to an exponential reduction of time and memory when compared to a naive method based on cutoff results.

\subsection{Symbolic Semantics for \DTNpersistent{}}

As \cref{lem:deltagmin} shows, to decide whether a path with location guards can be executed in a \DTNpersistent{}, it is important to keep track of global time.
The natural way to do this is by introducing an auxiliary clock $\gclock$ that is never reset~\cite{BehrmannF01,Al-BatainehR017}.
We capture this idea by proposing the following definition of a DTN-Zone Graph.

\smartpar{DTN-Zone Graph}
Given a gTA $A = (\Loc,\hat{\loc},\mathcal{C},\mathcal{T},\invariant)$, let $A' = (\Loc,\hat{\loc},\mathcal{C}\cup\{\gclock\},\mathcal{T},\invariant)$.
Then the \emph{DTN-zone graph} $\ZoneGraph^\infty(A')$ is a transition system where
\begin{itemize}
    \setlength{\itemsep}{0pt}
    \setlength{\parskip}{0pt}
    \setlength{\parsep}{0pt}
\item the set $S$ of nodes and the initial node $(\hat{\loc},\zone_0)$ are as in $\ZoneGraph(A')$.

\item For any two nodes $(\loc,\zone)$ and $(\loc',\zone')$, there is
	\begin{itemize}
		\item a \emph{guarded transition}
        $(\loc,\zone) \xRightarrow{\tau, \gamma} (\loc',\zone')$ if there is a symbolic transition $(\loc,\zone) \xRightarrow{\tau} (\loc',\zone')$ in $\ZoneGraph(A')$ and $\gamma=\top$;
		\item a \emph{guarded transition} $(\loc,\zone) \xRightarrow{\tau, \gamma} (\loc',\zone')$
		if there exists a $\tau \in \mathcal{T}$ such that $(\loc',\zone') = (\loc', \{ v' \in \mathbb{R}^X_{\geq 0} \mid \exists v \in \zone \colon v(\gclock) \geq \nInfMinReachT{\gamma} \land (\loc,v) \rightarrow^{\tau} (\loc',v')\})^\uparrow$, $\zone' \neq \emptyset$ and $\gamma\neq\top$. I.e., in this case we effectively add $\gclock \geq \nInfMinReachT{\gamma}$ to the clock constraint of $\tau$.
		\end{itemize}

\item $\xRightarrow{\gamma}$ is the union of all $\xRightarrow{\tau, \gamma}$, and $\xRightarrow{\gamma}^*$ is the transitive closure of $\xRightarrow{\gamma}$
\end{itemize}

Let $\vmin$ be a function that takes as input a zone $\zone$ and a clock $c$, and returns the lower bound of $c$ in $\zone$.
Then we want (a finite version of) $\zgDTN$ to satisfy the following properties:
\begin{itemize}
\item \emph{soundness with respect to \minreach}, \ie{} if  $(\hat{\loc},\zone_0) \xRightarrow{\gamma}^* (\loc',\zone')$
is such that $\vmin(\zone', \gclock) \leq \vmin(\zone'', \gclock)$ for all nodes $(\loc',\zone'')$ reachable in $\ZoneGraph^\infty(A')$, then there exist $n \in \Nats$ and a timed path $\gcomputation = \nConfigInit \xrightarrow{\delta_0,\nDisTrans{0}} \nConfig_1 \xrightarrow{\delta_1, \nDisTrans{1}} \ldots \xrightarrow{\delta_{l-1}, \nDisTrans{l-1}} \nConfig_l$ of $A^n$ with $(\loc',v') \in \nConfig_l$ such that $v'\in \zone'$ and $v'(\gclock)=\vmin(\zone', \gclock)$.

\item \emph{completeness with respect to \minreach}, \ie{} if $\loc'$ is reachable in $A^\infty$ then there exists $(\hat{\loc},\zone_0) \xRightarrow{\gamma}^* (\loc',\zone')$ with $\vmin(\zone', \gclock)=\nInfMinReachT{\loc'}$.
\end{itemize}

Note that for a gTA~$A$, and $A'$ defined as above, 
 the zone graph $\ZoneGraph(\UG{A'})$ is sound w.r.t.~\minreach when considering executions of a single copy of $\UG{A}$.
Moreover, it is known that completeness in this setting can be preserved under a time-bounded exploration with a bound $\timeBound$ such that $\timeBound >\minReachT{\loc}$ for every $\loc \in \Loc$.
For $\ZoneGraph^\infty(A')$ and executions of $A^n$ however, having $\timeBound$ only slightly larger than $\minReachT{\loc}$ for every $\loc \in \Loc$ may not be sufficient to preserve completeness, as the following example shows:

\begin{example}
  We have already seen that for the gTA in \cref{figure:example-gTA} we have $\nInfMinReachT{\loc_3}=6$, even though $\minReachT{\loc} \leq 4$ for all $\loc$.
  Thus, if we choose $\timeBound = 4$, a time-bounded exploration will not find a path to $\loc_3$ in $\ZoneGraph^\infty(A')$.
\end{example}

\smartpar{Bounding minimal reachability times}
In the following, we compute an upper-bound on the minimum-time reachability in a \DTNpersistent{}, which will allow us to perform a time-bounded exploration, thus rendering the zone graph finite.

Let $\delta_{max} = \max \{ \minReachT{\loc} \mid \loc \in \ReachLoc{A}\}$.
Our upper bound is defined as $\UpperBound(A) = \delta_{max} \cdot (\card{\LocGuards(A)} + 1)$, \ie{} the maximum over the locations of the minimum to reach that location, times the number of location guards plus one.

\begin{restatable}{lemma}{reachability}
\label{lem:reachability}
For any given gTA $A = (\Loc,\hat{\loc},\mathcal{C},\mathcal{T},\invariant)$, we have
\begin{enumerate}
  \item $\ReachLoc{A} \supseteq \ReachLoc{A^\infty}$, and
  \item for all $\loc \in \ReachLoc{A^\infty}$, $\nInfMinReachT{\loc} \leq \UpperBound(A)$.
\end{enumerate}
\end{restatable}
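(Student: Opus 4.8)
The two parts are handled separately. Part~1 is the easy direction: if $\loc \in \ReachLoc{A^\infty}$, then $\loc$ is reachable in some $A^n$, and projecting a witnessing computation onto a process that visits $\loc$ yields a timed path of $\UG{A}$ starting in $(\hat{\loc},\mathbf{0})$ --- a discrete transition of a single process is a discrete transition of $A$, clock guards, resets and invariants are inherited, and location guards simply become trivial in $\UG{A}$ --- so $\loc \in \ReachLoc{A}$.

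For Part~2, let $\RGuards(A) = \{\loc_1,\dots,\loc_n\}$ be the reachable guard locations ordered by $\reachorder$, so $n \le \card{\LocGuards(A)}$. The key claim, proved by induction on~$j$, is $\nInfMinReachT{\loc_j} \le j\cdot\delta_{max}$. For the inductive step, \cref{lemma:locationguards-ordered} yields a computation of $A^{j-1}$ in which, for each $i < j$, some process enters $\loc_i$ at time $\nInfMinReachT{\loc_i}$ and stays there forever; by the induction hypothesis these times are all at most $(j-1)\,\delta_{max}$, so from time $(j-1)\,\delta_{max}$ on every guard location in $\{\loc_1,\dots,\loc_{j-1}\}$ is permanently occupied. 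I then invoke the structural fact used in the proof of \cref{lemma:locationguards-ordered}, namely that the minimal-time computation reaching $\loc_j$ only depends on guard locations among $\loc_1,\dots,\loc_{j-1}$; this gives a $\UG{A}$-path to $\loc_j$ all of whose location guards lie in $\{\top,\loc_1,\dots,\loc_{j-1}\}$, and since those guards are already available a fresh $j$-th process can follow such a path and reach $\loc_j$ within at most $\delta_{max}$ further time, the purely clock-driven delay being bounded by $\delta_{max} = \max_{\loc\in\ReachLoc{A}}\minReachT{\loc}$. Hence $\nInfMinReachT{\loc_j}\le (j-1)\,\delta_{max}+\delta_{max} = j\cdot\delta_{max}$.

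To complete Part~2, fix an arbitrary $\loc\in\ReachLoc{A^\infty}$. Applying \cref{lemma:locationguards-ordered} with $i=n$ gives a computation of $A^n$ that places a permanent occupant in every $\loc_i$ by time $\nInfMinReachT{\loc_n}\le n\cdot\delta_{max}$. One further process then follows a $\UG{A}$-path to $\loc$ whose location guards all lie in $\RGuards(A)$ --- such a path exists by projecting, exactly as in Part~1, a witness computation for $\loc\in\ReachLoc{A^\infty}$, since every guard used along it is occupied at that instant and hence reachable in $A^\infty$ --- and reaches $\loc$ within at most $\delta_{max}$ more time. Therefore $\nInfMinReachT{\loc}\le n\cdot\delta_{max}+\delta_{max}\le(\card{\LocGuards(A)}+1)\,\delta_{max} = \UpperBound(A)$.

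\textbf{Main obstacle.} The delicate step is the assertion that, once $\loc_1,\dots,\loc_{j-1}$ (respectively all of $\RGuards(A)$) are permanently available, a fresh process reaches $\loc_j$ (respectively $\loc$) within $\delta_{max}$ additional time. A shortest $\UG{A}$-path witnessing $\minReachT{\loc_j}\le\delta_{max}$ may use guard locations outside the currently available set, so one has to reroute to a path using only available guards and argue that this does not inflate the delay beyond $\delta_{max}$; this is precisely where the structural information from \cref{lemma:locationguards-ordered} (minimal-time computations depend only on ``earlier'' guard locations) has to be combined with the definition of $\delta_{max}$. Carefully tracking the interplay between clock constraints and the order in which guard locations become unlocked is the heart of the argument.
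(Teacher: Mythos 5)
Your proposal follows essentially the same route as the paper's proof: Part~1 is the projection argument (the paper compresses it to $\Lg|_{[1,1]}(A^\infty)\subseteq\Lg(A)$), and Part~2 is an induction along $\reachorder$ that charges at most $\delta_{max}$ to each newly unlocked guard location. The only structural difference is bookkeeping: you induct over $\RGuards(A)$ and add one final $\delta_{max}$ for an arbitrary target location, whereas the paper orders all of $\ReachLoc{A^\infty}$ and discounts the non-guard locations among the predecessors; both decompositions yield the same bound $\UpperBound(A)=\delta_{max}\cdot(\card{\LocGuards(A)}+1)$.

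However, the step you flag as the ``main obstacle'' is left open in your write-up, and it is a genuine gap rather than a routine detail. The claim that a fresh process reaches $\loc_j$ within $\delta_{max}$ additional time once $\loc_1,\dots,\loc_{j-1}$ are available does not follow from the definition $\delta_{max}=\max\{\minReachT{\loc}\mid\loc\in\ReachLoc{A}\}$: the path witnessing $\minReachT{\loc_j}$ lives in $\UG{A}$ and may shortcut through transitions whose guards are not yet (or never) available, and the rerouted path that uses only available guards has no a priori relation to $\minReachT{\loc_j}$; moreover the fresh process starts at global time $0$ with valuation $\mathbf{0}$, so ``$\delta_{max}$ more time starting from time $(j-1)\delta_{max}$'' needs its own justification against the clock constraints of the rerouted path. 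To see that something real is at stake, consider a gTA with transitions $\hat{\loc}\xrightarrow{x\geq 0}a$ guarded by $b$, $\hat{\loc}\xrightarrow{x\geq 100}a$ unguarded, and $\hat{\loc}\xrightarrow{x\geq 0}b$ guarded by $a$: then $\minReachT{a}=\minReachT{b}=0$, so $\delta_{max}=0$ and $\UpperBound(A)=0$, yet $\nInfMinReachT{a}=100$. You should know that the paper's own proof makes the same unsupported assertion (it writes $\nInfMinReachT{\loc_1}=\minReachT{\loc_1}$ in the base case, which presupposes that the minimal \emph{unguarded} path to $\loc_1$ uses no guarded transitions), so your proposal is at the same level of rigor as the published argument; but closing the gap requires a strengthened notion of $\delta_{max}$ (e.g., bounding guard-free path segments from arbitrary reachable configurations rather than only from the initial one), which neither your text nor the paper supplies.
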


\iffull
\begin{proof}
Point~1 directly follows from the fact that $\Lg|_{[1,1]}(A^\infty) \subseteq \Lg(A)$.
For point~2, let $\ReachLoc{A^\infty} = \{ \loc_1, \ldots, \loc_k\}$ and assume wlog.\ $\loc_1 \reachorder \loc_2 \reachorder \ldots \reachorder \loc_k$.
Then the minimal timed path to reach $\loc_1$ must be such that it only uses discrete transitions with trivial location guards.
Therefore, we have $\nInfMinReachT{\loc_1}=\minReachT{\loc_1}\leq \delta_{max}$.
Inductively, we get for every $\loc_i$ with $i>1$ that it can only use transitions that rely on $\{\loc_1,\ldots,\loc_{i-1}\} \cap \LocGuards(A)$, and therefore $\nInfMinReachT{\loc_i}\leq (i-j) \cdot \delta_{max}$, where $j = \card{\{\loc_1,\ldots,\loc_{i-1}\} \setminus \LocGuards(A)}$.
\qed
\end{proof}

\fi

Thus, it is sufficient to perform a time-bounded analysis with $\UpperBound(A)$ as a time horizon.
Formally, let $\ZoneGraph_{\UpperBound(A)}^\infty(A')$ be the finite zone graph obtained from $\ZoneGraph^\infty(A')$ by intersecting every zone with $\gclock \leq \UpperBound(A)$.

\begin{restatable}{lemma}{minreachsingle}

\label{lem:minreach-single}
For $A = (\Loc,\hat{\loc},\mathcal{C},\mathcal{T},\invariant) \in \famTemplates_\top$, let $A' = (\Loc,\hat{\loc},\mathcal{C}\cup\{\gclock\},\mathcal{T},\invariant)$. Then $\bzgDTN$ is sound and complete with respect to \minreach.

\end{restatable}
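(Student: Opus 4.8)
The plan is to establish soundness and completeness separately, using the earlier results \cref{lemma:locationguards-ordered} and \cref{lem:deltagmin} to bridge between symbolic paths in $\bzgDTN$ and concrete computations in networks $A^n$, and \cref{lem:reachability} to justify that the time-bound $\UpperBound(A)$ does not lose any reachable location or underestimate any minimal reachability time.

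First I would prove **soundness**. Given a symbolic path $(\hat{\loc},\zone_0) \xRightarrow{\gamma}^* (\loc',\zone')$ in $\bzgDTN$ that is minimal in $\vmin(\cdot,\gclock)$ over all reachable nodes with location~$\loc'$, I would extract from it a concrete timed path $\lcomputation$ of a single copy of $\UG{A}$ ending in $(\loc',v')$ with $v'(\gclock)=\vmin(\zone',\gclock)$: this is the standard soundness of zone graphs for TAs applied to $\UG{A'}$, together with the observation that each guarded transition with nontrivial $\gamma$ was taken only from a zone where $\gclock \ge \nInfMinReachT{\gamma}$, so the corresponding concrete transition in $\lcomputation$ happens at global time $\ge \nInfMinReachT{\gamma}$. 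Hence every prefix $\lcomputation'$ of $\lcomputation$ ending in a discrete transition on $\tau$ with $\sguard{\tau}\neq\top$ satisfies $\nInfMinReachT{\sguard{\tau}} \le \totaltime(\lcomputation')$, so by the ``if'' direction of \cref{lem:deltagmin}, $\lcomputation \in \Lg|_{[1,1]}(A^\infty)$, i.e.\ there is $n$ and a computation $\gcomputation$ of $A^n$ with $\gcomputation|_{[1,1]}=\lcomputation$, which is exactly the timed path required by the soundness property (with $(\loc',v')\in\finalc(\gcomputation^{\le d})$ and $v'(\gclock)=\vmin(\zone',\gclock)$). I should double-check the edge case where the minimum is only an infimum (\cref{remark:infimum}), but the excerpt already grants that this can be handled uniformly.

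Next I would prove **completeness**: if $\loc'$ is reachable in $A^\infty$, I must produce a symbolic path in $\bzgDTN$ ending in $(\loc',\zone')$ with $\vmin(\zone',\gclock)=\nInfMinReachT{\loc'}$. Take a computation $\gcomputation$ of some $A^n$ that reaches $\loc'$ at global time exactly $\nInfMinReachT{\loc'}$, and let $\lcomputation = \gcomputation|_{[1,1]}$ be the projection onto the process that reaches $\loc'$. By \cref{lem:deltagmin}, every discrete transition of $\lcomputation$ on a $\tau$ with nontrivial guard happens at global time $\ge \nInfMinReachT{\sguard{\tau}}$, so $\lcomputation$ is a legal path in the "$\gclock \ge \nInfMinReachT{\gamma}$"-augmented semantics underlying $\ZoneGraph^\infty(A')$; by completeness of zone graphs for TAs, there is a matching symbolic path in $\ZoneGraph^\infty(A')$ whose final zone contains a valuation with $\gclock = \nInfMinReachT{\loc'}$, so $\vmin$ of the final zone's $\gclock$-coordinate is $\le \nInfMinReachT{\loc'}$; the reverse inequality $\vmin \ge \nInfMinReachT{\loc'}$ follows from soundness (any concrete realization reaches $\loc'$ no earlier than $\nInfMinReachT{\loc'}$ by definition). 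Finally I would argue the bound does not truncate this path: by \cref{lem:reachability}, $\nInfMinReachT{\loc'} \le \UpperBound(A)$, and more carefully, along the minimal path every intermediate guarded transition fires at a time bounded by the minimal reachability time of an already-reached guard location, which by the inductive argument in the proof of \cref{lem:reachability} is at most $\UpperBound(A)$ as well; hence intersecting every zone with $\gclock \le \UpperBound(A)$ removes no node on this path, so the path survives in $\bzgDTN$.

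The main obstacle I expect is the completeness-under-truncation argument: it is not enough that the \emph{final} location is reached within $\UpperBound(A)$; I need that some minimal-time witness path can be chosen so that \emph{all} of its guarded transitions also occur within the bound. This requires combining the ordering $\reachorder$ on reachable guard locations with the explicit inductive estimate from \cref{lem:reachability} --- essentially, reorganizing the witness so that guard location $\loc_i$ is supplied by a dedicated process following its own minimal-time path (as in \cref{lemma:locationguards-ordered}), each of which stays within $\delta_{max}\cdot i \le \UpperBound(A)$ --- and then checking that the zone-graph exploration, which tracks only a single copy of $\Au$, still discovers the projected path because each guard it relies on becomes available (in the $\nInfMinReachT{\gamma}$ sense) at a time $\le \UpperBound(A)$. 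The rest is a fairly mechanical transfer of the classical soundness/completeness of time-bounded zone graphs, specialized to the augmented guard semantics.
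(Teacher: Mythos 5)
Your proposal is correct, but it organizes the argument differently from the paper, most visibly in the soundness direction. The paper proves soundness by a direct induction on the number of \emph{distinct} non-trivial location guards occurring along the symbolic path: the base case falls back on ordinary zone-graph soundness for a single copy, and the step composes the inductively obtained network computation with a computation of $A^{k+1}$ (obtained from a variant of \cref{lemma:locationguards-ordered}) that reaches all $k+1$ guard locations at minimal time and parks a process in each forever. You instead extract a single concrete run of $\UG{A'}$ from the symbolic path, observe that every guarded step fires at global time at least $\nInfMinReachT{\gamma}$, and invoke the ``if'' direction of \cref{lem:deltagmin} to lift that run into $\Lg|_{[1,1]}(A^\infty)$. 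Since \cref{lem:deltagmin} is itself proved by exactly the flooding composition that the paper redoes inside its induction, the mathematical content is the same, but your route is more modular and avoids repeating the induction; the paper's version, in exchange, makes the number of auxiliary processes explicit. For completeness the paper only writes that it ``follows by construction of the bounded zone graph and from \cref{lem:reachability}'', and your expansion is a correct filling-in of that gap. One small remark: the ``main obstacle'' you flag --- that all guarded transitions of the witness, not just its endpoint, must occur within $\UpperBound(A)$ --- is automatic, since global time is non-decreasing along a run, so every prefix of a run reaching $\loc'$ at time $\nInfMinReachT{\loc'} \leq \UpperBound(A)$ also has elapsed time at most $\UpperBound(A)$; the heavier machinery you propose for this point (rerouting through the inductive estimate of \cref{lem:reachability} and a reorganized witness) is not needed.
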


\iffull
\begin{proof}
\emph{Soundness w.r.t.~\minreach}: let
$\sympath = (\hat{\loc},\zone_0) \xRightarrow{\tau_0,\gamma} (\loc_1,\zone_1) \xRightarrow{\tau_1,\gamma} \cdots \xRightarrow{\tau_{l-1},\gamma} (\loc_l,\zone_l)$ in $\bzgDTN$ with $\vmin(\zone_l,t) \leq \vmin(\zone,t)$ for all $(\loc_l,\zone)$ reachable in $\bzgDTN$.

We prove existence of a computation
$\gcomputation = \nConfigInit \xrightarrow{\delta_0,\nDisTrans{0}} \nConfig_1 \xrightarrow{\delta_1, \nDisTrans{1}} \ldots \xrightarrow{\delta_{l-1}, \nDisTrans{l-1}} \nConfig_l$ of $A^n$ (for some~$n$) with $v\in \zone_l$ and $v(\gclock)= \vmin(\zone_l,t)$
for some $(\loc,v) \in \nConfig_l$  %
based on induction over the number $k$ of different non-trivial location guards along~$\pi$.

\begin{description}
	\item[Induction base ($k=0$)] In this case $\sympath$ also exists in $\ZoneGraph(A')$, and a corresponding computation of $A'$ exists by \minreach-soundness of $\ZoneGraph(A')$. After projecting away $\gclock$, this is a computation of $A^1$ with the desired properties. %

	\item[Induction step ($k \rightarrow k+1$)] Assume that $\sympath$ has $k+1$ different non-trivial location guards, and let $(\hat{\loc},\zone_0) \xRightarrow{\tau_0,\gamma} (\loc_1,\zone_1) \xRightarrow{\tau_1,\gamma_1} \cdots \xRightarrow{\tau_{i-1},\gamma_{i-1}} (\loc_i,\zone_i)$ be the maximal prefix that has only~$k$ non-trivial location guards, \ie{} it is followed by a transition $(\loc_i,\zone_i) \xRightarrow{\tau_{i},\gamma_i} (\loc_{i+1},\zone_{i+1})$ where $\gamma_i \neq \top$ and $\gamma_i$ has not appeared as a location guard on the prefix.
Then by definition of \bzgDTN, this guarded transition can only be taken if there is a valuation $v \in \zone_i$ with $v(\gclock) \geq \nInfMinReachT{\gamma_i}$.
By induction hypothesis, there is a computation $\pi_k = \nConfigInit \xrightarrow{\delta_0,\nDisTrans{0}} \nConfig_1 \xrightarrow{\delta_1, \nDisTrans{1}} \ldots \xrightarrow{\delta_{i-1}, \nDisTrans{i-1}} \nConfig_i$ of some~$A^{n_1}$ with $v \in \zone_i$ and $v(\gclock)=\vmin(\zone_i,\gclock)$ for some $(\loc_i,v) \in \nConfig_i$.
Moreover, by (a variant of the proof of)
\cref{lemma:locationguards-ordered}, since not all guard locations that are reachable in this time may appear on the path, there exists a computation $\gcomputation_{min}$ of~$A^{k+1}$ that reaches each of the $k+1$ location guards at minimal time, and each process stays in its guard location forever.
Then the desired timed path is $\pi_k \parallel \pi_{min}$. %
\end{description}

\emph{Completeness w.r.t.~\minreach}: follows by construction of \bzgDTN and from \cref{lem:reachability}.
\qed
\end{proof}

\fi

\subsection{An Algorithm for Solving \minreach}

To solve \minreach in practice, it is usually not necessary to construct $\bzgDTN$ completely. 
For $\ReachLoc{A^\infty} = \{ \loc_1, \ldots, \loc_k\}$, assume wlog.\ that $\loc_1 \reachorder \loc_2 \reachorder \ldots \reachorder \loc_k$.
Then the timed path to $\loc_i$ with minimal global time for every $\loc_i$ can only have location guards that are in $\{ \loc_j \mid j < i \}$.
If we explore zones in order of increasing $\vmin(\zone,\gclock)$, we will find $\nInfMinReachT{\loc_1}$ without using any transitions with non-trivial location guards.
Then, we enable transitions guarded by $\loc_1$, and will find $\nInfMinReachT{\loc_2}$ using only the trivial guard and~$\loc_1$, etc.

An incremental algorithm that constructs the zone graph\footnote{As usual, for efficient zone graph construction we encode zones using \LongVersion{difference bound matrices (}DBMs, see \cite{BY03}\ifdefined\VersionForArXiV\cref{appendix:DBMs}\fi\LongVersion{)}.} %
  has to keep track of guarded transitions that cannot be taken at the current global time, but that may be possible to take at some later time.%

\cref{alg:minreachtime} takes $A$ as an input, constructs the relevant parts of $\ZoneGraph_{\UpperBound(A)}^\infty(A')$, and returns a mapping of locations $\loc$ to their $\nInfMinReachT{\loc}$.
As soon as we have discovered timed paths to all $\loc \in \Loc$, the algorithm terminates (\cref{line:terminate}).
	Otherwise, as long as we have graph nodes $(\loc,\zone)$ to explore, we pick the minimal node (w.r.t.~$\zone(\delta)$ in \cref{alg:pop}) and check that its zone $\zone$ is non-empty.
	If this is the case, we mark $\loc$ as visited and add any successor nodes.
	Furthermore, we remember any transitions that are currently not enabled, and store them with the current node in~$D$.
		
  \begin{figure}[tb]
    \centering
    \scalebox{0.9}{%
    \begin{algorithm}[H]
        \SetKwInOut{Input}{Input}
        \SetKwInOut{Output}{Output}
        \lForEach{$\loc$}{$\mathit{MinReach}(\loc)=\infty, \mathit{visited}(\loc)= \bot$}
        $W=\{(\loc_0,\zone_0)\}$, $D = \emptyset$\tcp{\textbf{W}aiting nodes and \textbf{D}isabled transitions}
  
        \While{$W \neq \emptyset \text{ and } \exists \loc. \mathit{visited}(\loc)=\bot$ \label{line:terminate}}{
          Remove $(\loc,\zone)$ from $W$ with the least $\vmin(\zone, \gclock)$ value \label{alg:pop}
  
          \If{$\zone \neq \emptyset \text{ and } \vmin(\zone, \gclock) \leq \UpperBound(A)$}{
            Set $\mathit{visited}(\loc)=\top$
            
            \tcp{compute successors along enabled transitions:}
            $S = \{ (\loc',\zone') \mid \exists \tau. \sguard{\tau}=\gamma \land (\loc,\zone) \xRightarrow{\tau} (\loc',\zone') \land (visited(\gamma) \lor \gamma=\top) \}$
            
            \tcp{new disabled transitions:}
            $D' = \{ (\tau,(\loc,\zone)) \mid  (\loc,\zone) \xRightarrow{\tau,\gamma} (\loc',\zone') \land \neg(visited(\gamma)) \}$
            
            $D = D \cup D', \quad W = W \cup S$
  
              \If{$\vmin(\zone, \delta)< \mathit{MinReach}(\loc)$\label{alg:betterZone}}{
                $\mathit{MinReach}(\loc)= \vmin(\zone, \gclock)$ \tcp{happens only once for each~$\loc$}\label{alg:storeBetterZone}
                
                \tcp{transitions enabled by $\loc$:}
                $E=\{ (\tau, (\loc',\zone')) \in D \mid \sguard{\tau}=\loc \}$
                
                \tcp{compute successors along newly enabled transitions:}
                $S = \{ (\loc'',\zone'') \mid (\tau, (\loc',\zone')) \in E$\newline
                \hspace*{0.5cm} ~ $\land ~(\loc',(\zone' \land \gclock \geq \vmin(\zone, \gclock))) \xRightarrow{\tau} (\loc'',\zone'') \}$\label{alg:enabledSuccessors}
  
                $D = D \setminus E, \quad W = W \cup S$
              }
          }
        }
      \Return{MinReach}
        
        \caption{Algorithm to solve \minreach}
        \label{alg:minreachtime}
    \end{algorithm}}
  \end{figure}
	
	Finally, if the current node $(\loc,\zone)$ is such that $\vmin(\zone, \gclock) < \mathit{MinReach}(\loc)$ (\cref{alg:betterZone}), then we have discovered the minimal reachability time of $\loc$.
	In this case we store it in $\mathit{MinReach}$(\cref{alg:storeBetterZone}), and we compute the successor along $\tau$ for every tuple $(\tau,(\loc',\zone')) \in D$ with $\sguard{\tau}=\loc$, representing a transition that has just become enabled, after intersecting $\zone'$ with $\gclock \geq \vmin(\zone, \gclock)$, as the transition is only enabled now (\cref{alg:enabledSuccessors}).

Correctness of the algorithm follows directly from \cref{lem:minreach-single}, and termination follows from finiteness of $\ZoneGraph_{\UpperBound(A)}^\infty(A')$.

\subsection{Verification of DTNs}
For a given gTA $A = (\Loc,\hat{\loc},\mathcal{C},\mathcal{T},\invariant)$, the \emph{summary automaton of $A$} is the gTA $\summary{A}=(\Loc,\hat{\loc},\mathcal{C}\cup\{\gclock\},\mathcal{\hat{T}},\invariant)$ with $\hat{\tau} = (\loc,\hat{g},r,\top,\loc') \in \mathcal{\hat{T}}$ if $\tau = (\loc,g,r,\gamma,\loc') \in \mathcal{T}$ and either $\gamma = \top \land \hat{g}=g$ or $\gamma \in \ReachLoc{A^\infty} \land \hat{g} = \left(g \land \gclock \geq \nInfMinReachT{\gamma}\right)$.

\begin{theorem}[Summary Automaton]
\label{thm:summary}
Let $A = (\Loc,\hat{\loc},\mathcal{C},\mathcal{T},\invariant) \in \famTemplates_\top$.
Then $\Lg|_{[1,i]}\left(A^\infty\right) = \Lg\Big( \big(\UG{\summary{A}} \big)^i \Big)$ for all $i \in \Nats$.
\end{theorem}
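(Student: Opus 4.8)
The plan is to prove the two language inclusions separately, in each case reducing from a statement about $i$ processes to the single-process characterization already established in \cref{lem:deltagmin}. The key conceptual point is that, by \cref{lem:deltagmin}, a computation $\lcomputation \in \Lg(\UG{A})$ lies in $\Lg|_{[1,1]}(A^\infty)$ exactly when every discrete transition based on some $\tau$ with $\sguard{\tau} = \gamma \neq \top$ occurs at a global time $\geq \nInfMinReachT{\gamma}$; and this is precisely the condition that the modified guard $\hat g = g \wedge \gclock \geq \nInfMinReachT{\gamma}$ of the summary automaton enforces, since $\gclock$ is never reset and hence equals the elapsed global time $\totaltime(\lcomputation')$ along any prefix $\lcomputation'$. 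So for $i=1$ the theorem is essentially a restatement of \cref{lem:deltagmin} via the syntactic construction of $\summary{A}$; the work is in lifting this to arbitrary $i$.

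For the inclusion $\Lg|_{[1,i]}(A^\infty) \subseteq \Lg((\UG{\summary A})^i)$: take $\pi \in \Lg(A^n)$ with $n \geq i$ and consider $\pi|_{[1,i]}$. Each of the $i$ tracked processes performs a local timed path; I would argue that whenever process $j$ takes a discrete transition based on $\tau$ with $\sguard{\tau} = \gamma \neq \top$, the semantics of $A^n$ guarantees some other process occupies $\gamma$ at that moment, so $\gamma$ is reachable in $A^\infty$ and, by definition of $\nInfMinReachT{\gamma}$, the current global time is $\geq \nInfMinReachT{\gamma}$. Hence the same moves are legal in $\UG{\summary A}$ (the modified clock guard is satisfied, the location guard has been replaced by $\top$), and since the $i$ copies of $\UG{\summary A}$ run fully independently, the tuple of these local paths is a computation of $(\UG{\summary A})^i$. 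One has to be a little careful with the bookkeeping of stuttering transitions introduced by projection, but these are harmless since $\epsilon$ is available in every automaton.

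For the reverse inclusion $\Lg((\UG{\summary A})^i) \subseteq \Lg|_{[1,i]}(A^\infty)$: take a computation of $(\UG{\summary A})^i$, i.e.\ $i$ independent local timed paths $\lcomputation_1, \dots, \lcomputation_i$ of $\UG{\summary A}$ sharing the global clock $\gclock$. Each $\lcomputation_j$, viewed as a path of $\UG{A}$, satisfies the hypothesis of \cref{lem:deltagmin} because the summary guards force every guarded transition to fire at global time $\geq \nInfMinReachT{\gamma}$; hence each $\lcomputation_j \in \Lg|_{[1,1]}(A^\infty)$. To combine them I would invoke \cref{lemma:locationguards-ordered}: there is a computation $\gcomputation_{\min}$ of $A^{|\RGuards(A)|}$ in which, for every reachable guard location $\loc$, some dedicated process reaches $\loc$ at time $\nInfMinReachT{\loc}$ and stays there forever. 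Composing $\lcomputation_1 \parallel \cdots \parallel \lcomputation_i \parallel \gcomputation_{\min}$ gives a timed path of $A^{i + |\RGuards(A)|}$; I must check that every guarded discrete transition of every $\lcomputation_j$ is enabled in this composition — which holds exactly because its firing time is $\geq \nInfMinReachT{\gamma}$ and the dedicated $\gcomputation_{\min}$-process is already parked in $\gamma$ by then. Projecting onto $[1,i]$ recovers the original computation, so it lies in $\Lg|_{[1,i]}(A^\infty)$.

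The main obstacle I anticipate is the correct handling of the \emph{composition and projection machinery}: synchronizing the independent local time scales of the $i$ copies with the time scale of $\gcomputation_{\min}$ so that delay transitions line up, ensuring that interleaving the discrete transitions does not violate any invariant (here the persistence assumption $A \in \famTemplates_\top$, i.e.\ guard locations have no invariants, is essential so the parked processes can indeed stay forever), and verifying that projection/insertion of stuttering transitions is consistent with the definition of $\Lg|_{[1,i]}$. Everything else is a fairly direct translation through \cref{lem:deltagmin} and \cref{lemma:locationguards-ordered}.
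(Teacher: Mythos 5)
Your proposal is correct and follows essentially the same route as the paper: the $i=1$ case is a direct translation of \cref{lem:deltagmin} through the definition of $\summary{A}$, and the general case reduces to the per-process equality, realized simultaneously by composing with the flooding computation from \cref{lemma:locationguards-ordered}. The paper's own proof is just a two-line appeal to these facts; you merely spell out the composition/projection bookkeeping that it leaves implicit.
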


\begin{proof}
For $i=1$, the statement directly follows from \cref{lem:deltagmin} and the definition of $\summary{A}$.
For $i>1$, it follows from the fact that $\Lg|_{[j,j]}(A^\infty) = \Lg\big( \UG{\summary{A}} \big)$ for each $j \in [1,i]$. \qed
\end{proof}

\cref{thm:summary} tells us that the summary automaton $\summary{A}$ can be used to answer any verification question that is based on $\Lg|_{[1,1]}\left(A^\infty\right)$, \ie{} the local runs of a single gTA in a \DTNpersistent{} $A^\infty$.
This includes standard problems like reachability of locations, but also (global) timing properties, as well as liveness properties.
Moreover, the same holds for behaviors of multiple processes in a \DTNpersistent{}, such as mutual exclusion properties, by simply composing copies of~$\hat{A}$.
In particular, any model checking problem in a \DTNpersistent{} that can be reduced to checking a cutoff system by \cref{thm:Spalazzi} can be reduced to a problem on the summary automaton, which is exponentially smaller than the cutoff system.

	\section{Conditions for Decidability %
with Location Invariants}\label{sec:cutoffs}
In \cref{sec:intro}, we argued that our motivating example would benefit from invariants that limit how long a process can stay in any of the locations.
Neither the results presented so far nor existing cutoff results support such a model, since it would have invariants on locations that appear in location guards.
To see this, note that the correctness of \cref{thm:Spalazzi}, like the \minreach-soundness argument of the DTN-zone graph, relies on the fact that in a $DTN^-$, if we know that a location~$\loc$ can be reached (within some time or on some timed path) and we argue about the existence of a local behavior of a process, we can always assume that there are other processes in the system that reach $\loc$ \emph{and stay there forever}.
This proof construction is called \emph{flooding} of location~$\loc$, but it is in general not possible if $\loc$ has a non-trivial invariant.

In this section we generalize this flooding construction and provide sufficient conditions
under which we can obtain a reduction to a summary automaton or a cutoff system, even in the presence of invariants.
For the rest of the section, we assume that~$A$ has a single clock~$x$.

\smartpar{General Flooding Computations}
We say that $\gcomputation \in \mathcal{L}(A^\infty)$ is a \emph{flooding computation for $\loc$} if $\loc \in \finalc(\gcomputation^{\leq d})$ for every $d \geq \nInfMinReachT{\loc}$, \ie{} $\loc$ is reached in minimal time and will be occupied by at least one process at every later point in time.
Then, we obtain the following as a consequence of \cref{lem:minreach-single}:\footnote{To see this, consider the $\gcomputation_{min}$ from the soundness part of the proof, but instead of letting all processes stay in the guard locations, use our new flooding computation to determine their behaviour afterwards.}

\begin{corollary}
Let $A = (Q,\hat{\loc},\mathcal{C},\mathcal{T},\invariant)$ be a gTA, not necessarily from $\famTemplates_\top$, and let $A' = (Q,\hat{\loc},\mathcal{C}\cup\{\gclock\},\mathcal{T},\invariant)$.
If there exists a flooding computation for every $\loc \in \RGuards(A)$, then $\bzgDTN$ is correct.
\end{corollary}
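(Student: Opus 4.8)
The plan is to inspect the proof of \cref{lem:minreach-single} and identify the single place where the assumption $A \in \famTemplates_\top$ was actually used, namely in the soundness direction where the auxiliary computation $\gcomputation_{min}$ of $A^{k+1}$ was built so that each of the $k+1$ relevant guard locations is reached at minimal time \emph{and each process then stays in its guard location forever}. That ``stays forever'' step is exactly the flooding construction, and it is the only step that fails when guard locations carry invariants. So the first thing I would do is replay the induction over the number $k$ of distinct non-trivial location guards appearing along a symbolic path $\sympath$ in $\bzgDTN$, verbatim, up to that point.

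Next I would modify the induction step: whenever the maximal $k$-guard prefix is extended by a guarded transition $(\loc_i,\zone_i) \xRightarrow{\tau_i,\gamma_i} (\loc_{i+1},\zone_{i+1})$ whose guard $\gamma_i$ is new, the definition of $\bzgDTN$ forces a valuation $v \in \zone_i$ with $v(\gclock) \ge \nInfMinReachT{\gamma_i}$, so in particular $\gamma_i$ is a reachable guard location, i.e. $\gamma_i \in \RGuards(A)$. By hypothesis there is a flooding computation $\gcomputation^{\gamma_i}$ for $\gamma_i$; since $\gamma_i$ is occupied by some process at every time $d \ge \nInfMinReachT{\gamma_i}$ along $\gcomputation^{\gamma_i}$, composing the inductively obtained computation $\pi_k$ of $A^{n_1}$ with (a sufficient number of copies merged from) the flooding computations for all $k+1$ guard locations yields a computation of $A^{n}$ for a suitable $n$ in which, at the moment process~$1$ wants to take each guarded transition, the required guard location is occupied. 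One has to be mildly careful that the flooding computations for the different guard locations can be run in parallel — but this is immediate, since each flooding computation only uses guards from strictly-earlier guard locations (by the $\reachorder$-ordering built into $\bzgDTN$ and the incremental enabling of transitions), so they compose without circular dependencies, exactly as in the original proof of \cref{lemma:locationguards-ordered}.

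For completeness with respect to \minreach, nothing changes: it followed ``by construction of $\bzgDTN$ and from \cref{lem:reachability}'' and both of those ingredients are independent of whether guard locations have invariants. Hence $\bzgDTN$ is still sound and complete w.r.t.\ \minreach, i.e.\ correct.

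The main obstacle I expect is a bookkeeping issue rather than a conceptual one: showing that the flooding computations for the various guard locations, together with the prefix computation $\pi_k$ obtained from the induction hypothesis, can genuinely be scheduled concurrently as a single computation of one $A^n$, with the guard of each guarded transition satisfied exactly when process~$1$ fires it. This is where the footnote's hint is doing the real work — one re-runs the argument of \cref{lemma:locationguards-ordered} with ``stay forever'' replaced by ``follow the flooding computation'', and the acyclicity of guard dependencies (each guard location's flooding computation depends only on $\reachorder$-earlier guard locations) is what makes the parallel composition well-defined.
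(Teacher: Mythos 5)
Your proposal is correct and follows essentially the same route as the paper, whose entire argument for this corollary is the footnote instructing the reader to rerun the soundness induction of \cref{lem:minreach-single} with the ``reach the guard location and stay forever'' step replaced by the assumed flooding computations. The parallel-composition bookkeeping you single out as the delicate point is exactly what the paper formalizes separately in \cref{lem:mutualflooding} (there via a minimal-failure-time contradiction rather than your acyclicity phrasing --- note a flooding computation for $\loc$ may use $\loc$ itself as a guard, handled internally by the width of the flooding computation), so your plan matches the paper's intended proof.
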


Note that a flooding computation trivially exists if $\loc \in \ReachLoc{A^\infty}$ and $\invariant(\loc)=\top$.
Thus, the next question is how to determine whether flooding computations exist for locations with non-trivial invariants.

\smartpar{Identifying Flooding Computations based on Lassos}
We aim to find lasso-shaped local computations of a single process that visit a location $\loc$ infinitely often and can be composed into a flooding computation for~$\loc$.

Since any flooding computation $\gcomputation$ for $\loc$ must use one of the minimal paths to $\loc$ that we can find in $\bzgDTN$,
and any local computation in $\gcomputation$ must also be a computation of the summary automaton $\hat{A}$ of $A$,
our analysis will be based on the summary automaton $\hat{A}$ instead of~$A$.\footnote{Note that for this argument, we \emph{assume} that there are flooding computations for all $\loc \in \RGuards(A)$, and therefore $\zgDTN$ and $\hat{A}$ are correct. If we identify flooding lassos under this assumption then we have also shown that the assumption was justified.} %
Since furthermore every possible prefix of $\lcomputation$, including its final configuration $(\loc,v)$, is already computed in $\bzgDTN$, what remains to be found is the loop of the lasso, where one or more processes start from a configuration $(\loc,v)$ and always keep $\loc$ occupied.
To this end, the basic idea is that each process spends as much time as possible in~$\loc$.
To achieve this with a small number $c$ of processes, we also want to minimize times where more than one process is in~$\loc$.
We illustrate the idea on an example, and formalize it afterwards.

\begin{example}

\begin{figure}[tb]
\begin{tabular}{l  r}
  \begin{subfigure}{.49\textwidth}
  \begin{tikzpicture}[scale=1.5, font=\footnotesize]
        \begin{scope}[>=stealth]
          
          \node[location, initial] at (0, 0)  (q0) {$\hat{\loc}$};
		      \node[location] at (2, 0) (q1) {$\loc_0$};
          \node[invariant, above=of q1] {$x \leq 4$};
        \end{scope}
		\begin{scope}[->, rounded corners, >=stealth]

      \path (q0) edge[draw=brown, bend angle=10, bend left =25, thick] node[above]{$x=2$} (q1);
      \path (q1) edge[draw=red, bend angle=10, bend left =25, thick]
      node[above]{$x \leftarrow 0$} node[below, locguard]{$\loc_0$} (q0);
		\end{scope}
		\end{tikzpicture}
  \caption{A loop that can flood location $\loc_0$}
  \label{fig:gtp-timed-loop}
  \end{subfigure} \vspace*{0.1cm}
&
\begin{subfigure}{.49\textwidth}
    \centering
     \begin{tikzpicture}[scale=0.5, font=\footnotesize]
        \begin{scope}[>=stealth]
          
          \node [location, initial] (q0) at (-3, -3) { $\hat{q}$};
          \node [location] (q1) at (0,-3) {  $\loc_0$ };
          \node [location] (q2) at (-3,-4.5) {  $\loc_{i+1}$ };
          \node [location] (q3) at (3,-4.5) {$\loc_{j+1}$ };
          \node  (fake) at (0,0) {};
          \node[invariant, above= of q1] {$x \leq d$};
        \end{scope}
		\begin{scope}[->, rounded corners, >=stealth]
      \path[->,thick,dashed](q0) edge [] (q1);

      \draw[thick, dashed, draw=red](q1)--(q2);
      \node[] at (-1.5, -4.2) {\tiny $x \leftarrow 0$};
		
		\draw[-,thick, dashed,draw=ForestGreen](q2)--(-1.25,-4.5);
		\draw[->, thick,draw=ForestGreen](-1.2,-4.5)--(0.2,-4.5);
    \node[] at (-0.6, -4.8) {\tiny $x \leftarrow 0$};
		\draw[-,thick, dashed,draw=ForestGreen](0.3,-4.5)--(1,-4.5);
		\path[->, thick,draw=ForestGreen](1,-4.5) -- (q3);
    \node[] at (1.4, -4.8) {\tiny $x \leftarrow 0$};

		\path[->, thick, dashed, draw=brown](q3)--(q1);

		\end{scope}
  \end{tikzpicture}
  \caption{A general lasso for flooding~$\loc_0$, showing $\psi_1$ in red, $\psi_2$ in green, and $\psi_3$ in brown}
  \label{fig:general-timed-loop-color-codings}
\end{subfigure}
\end{tabular}
\caption{Loops and color codings for flooding computations}
\label{fig:gtp-timed-loop-with-general-version}

\end{figure}
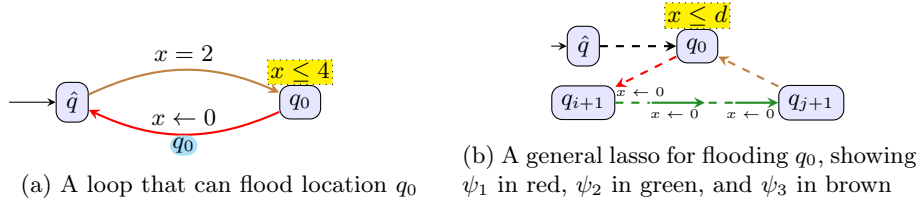

\label{ex:flooding_loop}
\cref{fig:gtp-timed-loop} shows a subgraph of the gTA in \cref{figure:example-gTA}.
To find a flooding computation for $\loc_0$,
we let two processes $p1$ and $p2$ start in location~$\hat{\loc}$ and have both of them move to~$\loc_0$ at global time~$2$.
Then, $p1$ immediately moves to~$\hat{\loc}$, which is possible as the location guard is enabled by $p2$.
After $2$ time units, $p2$ has to leave $\loc_0$ due to the invariant, so we let $p1$ return to $\loc_0$, which allows $p2$ to take the transition to $\hat{q}$, where stays for $2$ time units.
In this way, both processes can keep taking turns traversing the loop, keeping $\loc_0$ occupied forever.

\end{example}

\smartpar{Syntactic Paths and Timed Loops}
A (finite) \emph{path} in $\hat{A}$ %
is a sequence of transitions $\psi = \loc_0 \xrightarrow{\tau_0} \loc_1 \xrightarrow{\tau_1} \ldots \xrightarrow{\tau_{l-1}} \loc_{l}$, with $\tau_i=(\loc_i,g_i,r_i,\gamma_i,\loc_i') \in \mathcal{T}$ and $\loc_i'=\loc_{i+1}$ for $i \in [0,l[$.
We also denote $\loc_0$ as $\firstq(\psi)$, and call $\psi$ \emph{initial} if $\firstq(\psi) = \hat{\loc}$.
A path $\psi$ is a \emph{loop} if $\loc_0=\loc_l$.
We call a path \emph{resetting} if it has at least one clock reset.

We call a path $\psi = \loc_0 \xrightarrow{\tau_0} \ldots \xrightarrow{\tau_{l-1}} \loc_{l}$ \emph{executable from $(\loc_0,v_0)$} if there is a timed path $\lcomputation = (\loc_0,v_0) \xrightarrow{\delta_0,\tau_0}  \ldots \xrightarrow{\delta_{l-1},\tau_{l-1}} (\loc_{l},v_l)$ of $\hat{A}$.
We say that $\lcomputation$ is \emph{based on}~$\psi$.
For a path $\psi = \loc_0 \xrightarrow{\tau_0} \ldots \xrightarrow{\tau_{l-1}} \loc_{l}$ in $\hat{A}$ that is executable from $(\loc_0,v_0)$, we denote by  $\asap(\psi,v_0)$ the unique timed path $(\loc_0,v_0) \xrightarrow{\delta_0,\tau_0}  \ldots \xrightarrow{\delta_{l-1},\tau_{l-1}} (\loc_{l},v_l)$ such that in every step, $\delta_i$ is minimal.
A timed path $\lcomputation = \ourloop$ is called a \emph{timed loop} if $\loc_0=\loc_l$ and $u_0(x)=u_l(x)$.

\smartpar{Sufficient Conditions for Existence of a Flooding Computation}
Let $\psi$ be a resetting loop, 
where $\tau_i$ is the first resetting transition and $\tau_j$ is the last resetting transition on $\psi$.
As depicted in \cref{fig:general-timed-loop-color-codings}, we split $\psi$ into %
\begin{oneenumerate}%
    \item $\psi_1 = \psione$;
    \item $\psi_2=\psitwo$; and
    \item $\psi_3=\psithree$.
\end{oneenumerate}
Note that $\psi_2$ is empty if $i=j$, and $\psi_3$ is empty if $j=l-1$. 
If $\psi$ is executable from $(\loc_0,v_0)$, then $\asap(\psi,v_0) = (\loc_0,v_0) \xrightarrow{\delta_0,\tau_0}  \ldots \xrightarrow{\delta_{l-1},\tau_{l-1}} (\loc_{l},v_l)$ exists and we can compute the time needed to execute its parts as
\begin{itemize}
\item $d_1= \totaltime\big((\loc_0,v_0) \xrightarrow{\delta_0,\tau_0}  \ldots \xrightarrow{\delta_{i},\tau_{i}} (\loc_{i+1},v_{i+1})\big)$,
\item $d_2= \totaltime\big((\loc_{i+1},v_{i+1}) \xrightarrow{\delta_{i+1},\tau_{i+1}}  \ldots \xrightarrow{\delta_{j},\tau_{j}} (\loc_{j+1},v_{j+1})\big)$,
\item $d_3= \totaltime\big((\loc_{j+1},v_{j+1}) \xrightarrow{\delta_{l-1},\tau_{l-1}}  \ldots \xrightarrow{\delta_{l-1},\tau_{l-1}} (\loc_{0},v_{l})\big)$.
\end{itemize}

Note that the first process $p_1$ that traverses the loop along $\asap(\psi,v_0)$ will return to $\loc_0$ after time $d_1+d_2+d_3$.
Thus, if a process $p_2$ starts in the same configuration $(\loc_0,v_0)$, and stays in $\loc_0$ while $p_1$ traverses the loop, then $p_1$ will return to $\loc_0$ before $p_2$ has to leave due to $\invariant(\loc_0)$ if $U^\invariant_x(\loc_0) \geq d_1+d_2+d_3+v(x)$, where $U^\invariant_x(\loc_0)$ denotes the upper bound imposed on $x$ by $\invariant(\loc_0)$.
Moreover, $p_2$ can still execute $\psi$ from a configuration $(\loc_0,v')$ with $v'=v+\delta$ if and only if $v(x)+\delta \leq T$, where $T = \min\{ U^\invariant_x(\loc_k) \mid 0 \leq k \leq i\}$.
Generally, traversing $\psi$ is possible from any configuration $(\loc_0,v')$ with $v(x) \leq v'(x) \leq T$ and $v(\gclock) \leq v'(\gclock)$. In particular, we have $\delta(\asap(\psi,v')) \leq d_1+d_2+d_3$ and in the reached configuration $(\loc_0,v'')$ we have $v''(x)\leq d_3$.
Thus, if $p_2$ has to leave before $p_1$ returns, we can add more processes that start traversing the loop in regular intervals, such that $p_{i+1}$ always arrives in $\loc_0$ before $p_i$ has to leave.

\begin{example}
Consider again the flooding computation constructed in \cref{ex:flooding_loop}, depicted in \cref{fig:gtp-flooding-global-path-with-Invariants}.
In this case, $\psi_1 = \loc_0 \rightarrow \hat{\loc}$ (depicted in red), $\psi_2$ is empty, and $\psi_3 = \hat{\loc} \rightarrow \loc_0$ (in brown).
Note that we have a repetition of the global configuration after a bounded time, and therefore can keep $\loc_0$ occupied forever by repeating the timed loop.
\usetikzlibrary{decorations.pathreplacing}

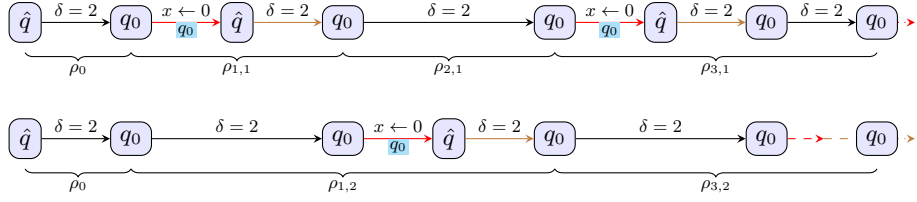
\begin{figure}[tb]
  
\begin{tikzpicture}[ font=\footnotesize]

    \matrix[column sep=0.9cm, row sep=1cm] (m) {
        \node[location](q0_0){$\hat{\loc}$}; 
      & \node[location] (q1_0){$\loc_0$}; 
      & \node[location] (q2_0){$\hat{\loc}$};
      & \node[location] (q3_0){$\loc_0$}; 
      & 
      & \node[location] (q5_0){$\loc_0$};
      & \node[location] (q6_0){$\hat{\loc}$};
      & \node[location] (q7_0){$\loc_0$};
      & \node[location] (q8_0){$\loc_0$};
      \\
        \node[location](q0_1){$\hat{\loc}$}; 
      & \node[location] (q1_1){$\loc_0$}; 
      &  
      & \node[location] (q3_1){$\loc_0$}; 
      & \node[location] (q4_1){$\hat{\loc}$}; 
      & \node[location] (q5_1){$\loc_0$};
      &
      & \node[location] (q7_1){$\loc_0$};
      & \node[location] (q8_1){$\loc_0$};
      \\
    };

    \begin{scope}[->,-stealth, every node/.style={scale=0.75}]
      \path[] (q0_0) edge[] node[above]{$\delta = 2$}  (q1_0);
      \path[] (q0_1) edge[] node[above]{$\delta = 2$}  (q1_1);

      \path[draw=red, red] (q1_0) edge[] node[above]{\color{black}$x
      \leftarrow 0$} node[below, locguard]{\color{black}$q_0$}  (q2_0);

      \path[draw=brown, brown] (q2_0) edge[] node[above]{\color{black}$\delta = 2$} (q3_0);

      \path[] (q1_1) edge[] node[above]{ $\delta = 2$} (q3_1);

      \path[draw=red, red] (q3_1) edge[] node[above]{\color{black}$x \leftarrow 0$} node[below, locguard]{\color{black}$q_0$}  (q4_1);

      \path[draw=brown, brown] (q4_1) edge[] node[above]{\color{black}$\delta = 2$} (q5_1);

      \path[] (q3_0) edge[] node[above]{ $\delta = 2$} (q5_0);

      \path[draw=red, red] (q5_0) edge[] node[above]{\color{black}$x \leftarrow 0$} node[below, locguard]{\color{black}$q_0$}  (q6_0);

      \path[draw=brown,brown] (q6_0) edge[] node[above]{\color{black}$\delta = 2$} (q7_0);

      \path[] (q5_1) edge[] node[above]{$\delta = 2$} (q7_1);

      \path[] (q7_0) edge[] node[above]{$\delta = 2$} (q8_0);
      \path[dashed, draw=red, red] (q7_1) edge[] node[above]{} ($(q8_1) -  (0.7,0)$);
      \path[-,dashed, draw=brown, brown] ($(q8_1) -  (0.7,0)$) edge[] node[above]{} (q8_1);

      \path[dotted, draw=red, red] (q8_0) edge[] ($(q8_0) + (0.5,0)$);
      \path[draw=brown, dotted, brown] (q8_1) edge[] ($(q8_1) + (0.5,0)$);
    \end{scope}

    \begin{scope}[every node/.style={scale=0.75}]
      \draw[decorate, decoration={brace,amplitude=3pt,mirror}] ($ (q0_0) - (0,0.4)$) -- ($ (q1_0) - (0,0.4)$) node[midway, black, yshift=-0.3cm] {$\rho_0$};
      \draw[decorate, decoration={brace,amplitude=3pt,mirror}] ($ (q1_0) - (0,0.4)$) -- ($ (q3_0) - (0,0.4)$) node[midway, black, yshift=-0.3cm] {$\rho_{1,1}$};
      \draw[decorate, decoration={brace,amplitude=3pt,mirror}] ($ (q3_0) - (0,0.4)$) -- ($ (q5_0) - (0,0.4)$) node[midway, black, yshift=-0.3cm] {$\rho_{2,1}$};
      \draw[decorate, decoration={brace,amplitude=3pt,mirror}] ($ (q5_0) - (0,0.4)$) -- ($ (q8_0) - (0,0.4)$) node[midway, black, yshift=-0.3cm] {$\rho_{3,1}$};

      \draw[decorate, decoration={brace,amplitude=3pt,mirror}] ($ (q0_1) - (0,0.4)$) -- ($ (q1_1) - (0,0.4)$) node[midway, black, yshift=-0.3cm] {$\rho_0$};
      \draw[decorate, decoration={brace,amplitude=3pt,mirror}] ($ (q1_1) - (0,0.4)$) -- ($ (q5_1) - (0,0.4)$) node[midway, black, yshift=-0.3cm] {$\rho_{1,2}$};
      \draw[decorate, decoration={brace,amplitude=3pt,mirror}] ($ (q5_1) - (0,0.4)$) -- ($ (q8_1) - (0,0.4)$) node[midway, black, yshift=-0.3cm] {$\rho_{3,2}$};
    \end{scope}
    
  \end{tikzpicture}

  \caption{Flooding computation for location $\loc_0$ from  \cref{fig:gtp-timed-loop}}
  \label{fig:gtp-flooding-global-path-with-Invariants}
\end{figure}

\end{example}

Based on these observations, we can show that the conditions mentioned above are sufficient to guarantee that a flooding computation for $\loc_0$ exists, provided that we also have a flooding computation for all other guard locations.

\begin{restatable}{lemma}{floodingloop}

\label{lem:flooding_loop}
Let $\lcomputation_0 = (\hat{\loc},\mathbf{0})  \xrightarrow{} \ldots \xrightarrow{} (\loc_0,v_0)$ be a computation of $\hat{A}$ that reaches $\loc_0$ after minimal time, and $\psi = \loc_0 \xrightarrow{\tau_0} \ldots \xrightarrow{\tau_{l-1}} \loc_0$ a resetting loop in~$\hat{A}$ executable from $(\loc_0,v_0)$.
Let $d_1,d_2,d_3, T$ as defined above.
If $T \geq d_1 + d_2 + d_3 + v(x)$, $T > d_3$, and there exists a flooding computation for all $\loc \in \RGuards(A){\setminus}\{\loc_0\}$, then there exists a flooding computation for $\loc_0$.

\end{restatable}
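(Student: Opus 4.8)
The plan is to build, for a suitably large but finite $c$, a computation of $A^c$ --- run in parallel with the assumed flooding computations for the other reachable guard locations --- that keeps $\loc_0$ occupied at every time $d \ge \nInfMinReachT{\loc_0}$. The construction is the \emph{relay} informally described in \cref{ex:flooding_loop} and depicted in \cref{fig:gtp-flooding-global-path-with-Invariants}: each of the $c$ processes first executes its own copy of $\lcomputation_0$ to reach $(\loc_0,v_0)$ at global time $\nInfMinReachT{\loc_0}$, and thereafter repeatedly alternates between \emph{idling} in $\loc_0$ and \emph{looping} once along $\asap(\psi,\cdot)$; the processes are phase-shifted against one another so that whenever one of them has to leave $\loc_0$, another is already sitting there.

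First I would isolate the quantitative facts about a single relay process, all of which are exactly the observations collected just before the lemma: from any configuration $(\loc_0,v')$ whose clock value lies in the executable range $[v_0(x),T]$ the loop $\psi$ can be traversed; this takes time $\le d_1+d_2+d_3$ and leaves the process back in $\loc_0$ with clock value $\le d_3$; and since $T \le U^\invariant_x(\loc_0)$, the process may idle in $\loc_0$ to bring its clock into $[v_0(x),T]$. The hypothesis $T \ge d_1+d_2+d_3+v_0(x)$ is precisely what enables the \emph{first} hand-off: a process in $(\loc_0,v_0)$ can idle in $\loc_0$ while another completes a full loop and returns, and still has clock value $\le T$ afterwards, so it can loop in turn. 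The hypothesis $T > d_3$ guarantees that after \emph{every} loop a process can still idle in $\loc_0$ for a positive duration $\ge T-d_3$ before it must loop again; this is what makes a finite $c$ sufficient.

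Second I would schedule $c$ such processes --- giving them different initial idle durations, or letting some of them perform a few extra loop traversals, to phase-shift them --- so that the union of their ``in-$\loc_0$'' time intervals is all of $[\nInfMinReachT{\loc_0},\infty)$. Each process can run this schedule forever because whenever it re-enters $\loc_0$ it can idle its clock back into the feasible window $[v_0(x),T]$, and because the only remaining guards along $\psi$ in $\hat{A}$ are lower bounds on the monotone global clock $\gclock$, which only become easier to satisfy on later iterations. Finally, composing this relay with the assumed flooding computations $\gcomputation^{\gamma}$ for all $\gamma \in \RGuards(A)\setminus\{\loc_0\}$ produces a genuine computation of $A^{n}$ with $n = c+\sum_{\gamma}n_\gamma$: exactly as in the soundness part of the proof of \cref{lem:minreach-single} (via \cref{lemma:locationguards-ordered}), any step that in $A$ carried a nontrivial guard $\gamma$ is, in $\hat{A}$, guarded by $\gclock \ge \nInfMinReachT{\gamma}$ and hence taken only at global time $\ge \nInfMinReachT{\gamma}$, by which point $\gamma$ is occupied --- by $\gcomputation^{\gamma}$, or, if $\gamma=\loc_0$, by the relay itself (as the global time is then $\ge \nInfMinReachT{\loc_0}$). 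This computation keeps $\loc_0$ occupied from $\nInfMinReachT{\loc_0}$ on, i.e.\ it is a flooding computation for $\loc_0$.

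The hard part is the scheduling argument of the second step: fixing a finite $c$ and a phase-shift pattern so that finitely many eventually periodic ``duty cycles'' --- each ``on'' for at least $T-d_3>0$ out of a bounded period --- together cover the whole half-line $[\nInfMinReachT{\loc_0},\infty)$, while simultaneously keeping every process's clock inside the executable range $[v_0(x),T]$ at each re-entry into $\loc_0$ (so it can re-execute $\psi$) and inside $U^\invariant_x(\loc_0)$ while it idles, and reconciling the transient bootstrap phase with the periodic tail. The case $\psi_2=\emptyset$ in \cref{fig:gtp-flooding-global-path-with-Invariants}, where two processes already suffice, is the easy instance; in general the number of processes grows like $(d_1+d_2+d_3)/(T-d_3)$.
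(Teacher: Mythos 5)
Your proposal is correct and follows essentially the same route as the paper's proof: a phase-shifted relay of finitely many processes that alternate between idling in $\loc_0$ and traversing $\asap(\psi,\cdot)$, composed with the assumed flooding computations for the other guard locations, with $T\ge d_1+d_2+d_3+v_0(x)$ enabling the first hand-off and $T>d_3$ bounding the number of processes. The paper simply makes the schedule you defer explicit --- $c=\lceil (T+d_2)/(T-d_3)\rceil$ processes with offsets $\delta_p=(p-1)(T-d_3)$ and an eventually periodic tail $(\rho_{3,p})^\omega$ --- and treats the degenerate zero-time-loop case ($d_2=d_3=0$) separately with two alternating processes.
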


\iffull
\begin{proof}
To show that there exists a flooding computation for $\loc_0$ in $\mathcal{L}(A^\infty)$, we construct local timed paths of $\hat{A}$ for a number $c$ of processes, and then show that there exists a computation $\pi$ of $A^n$ with $n \geq c$ such that $\pi|_{[1,c]}$ is the composition of these local timed paths, and therefore $\pi$ is a flooding computation of $\loc_0$.
Note that in both cases below, all location guards except $\loc_0$ are implicitly taken care of, since $\rho_0$ and our computation of the $t_i$ are based on $\hat{A}$ (and therefore take the minimal time to reach guard locations into account), and by assumption all other locations have a flooding path.

\textbf{Case 1}: {$d_2=d_3=0$}, \ie{} if $x=T$ then the loop can be traversed from $\loc_0$ in $0$ time.
In this case the flooding computation lets two processes move to $\loc_0$ in minimal time, lets both of them stay in $\loc_0$ until $x=T$, then lets process~1 traverse the loop in $0$ time, and finally does the same for process~2.
This can be repeated until infinity, always keeping $\loc_0$ occupied.
Note that a flooding computation with a single process may not be possible, as $\loc_0$ could appear as a location guard.

\textbf{Case 2}: The loop cannot be traversed in $0$ time.
In this case, let {$c = \lceil \frac{T+d_2}{T-d_3} \rceil$}, and
for each $p \in [1,c]$, define $\rho(p) = \rho_{1, p} \cdot \rho_{2, p} \cdot \rho_{3, p}$ as follows:
\begin{itemize}
\setlength\itemsep{-0.2em}
    \item $\rho_{1, p} = (\loc_0,v_0) \xrightarrow{\delta_p} (\loc_0, v_0 + \delta_p) \cdot \asap(\psi,v_0 + \delta_p)$ with {$\delta_p = (p-1)(T-d_3)$}
		\item $\rho_{2, p} = (\loc_0,v_1) \xrightarrow{\delta'_p} (\loc_0, v_1 + \delta'_p)$ with $v_1$ the resulting clock valuation of $\asap(\psi,v_0 + \delta_p)$ and $\delta'_p$ such that $(v_0+\delta'_p)(x)=T$
    \item $\rho_{3, p} = \asap(\psi,v_2) \cdot (\loc_0,v_2') \xrightarrow{\delta''_p} (\loc_0, v_2' + \delta''_p)$ with $v_2 = v_1 + \delta'_p$, $v_2'$ the resulting clock valuation of $\asap(\psi,v_2)$, and $\delta''_p$ such that $(v_2'+\delta''_p)(x)=T$
\end{itemize}

By construction, $\rho(p) \in \mathcal{L}(\hat{A})$ for every $p$. 
Moreover, note that, up to the valuation of $\gclock$, $\rho_{3,p}$ can be appended an arbitrary number of times to $\rho(p)$ (since for any $v$ with $v(x)=T$ and {$v(\gclock) \geq v_0(\gclock)+d_1+d_2+d_3$}, the time $\delta(\asap(\loc_0,v))$ and the resulting clock valuation for $x$ will always be the same).
Thus, let $\rho^\omega(p)$ be the projection of $\rho_{1,p} \cdot \rho_{2,p} \cdot (\rho_{3, p})^\omega$ to $A$, i.e., without valuations of $\gclock$.

Now, note that at any time {$d \geq 0$}, $\loc_0$ is occupied in at least one of the $\rho^\omega(p)$: for {$d \leq d_1 + d_2 + d_3$}, process~$c$ has not left $\loc_0$ yet, and
at any time {$d > d_1 + d_2 + d_3$}, process {$\lceil \frac{d - (d_1 + d_2 + d_3)}{T - d_3} \rceil \mod c$} occupies $\loc_0$.

\textbf{Both cases}: 
The desired $\pi$ is the composition of $\rho_0 \cdot \rho^\omega(p)$ with the flooding computations for all $\loc \in \ReachLoc{A^\infty}{\setminus}\{\loc_0\}$, which we assumed to exist.
\qed
\end{proof}

\fi

{Note that this property depends on our assumption that $A$ has a single clock. It remains open whether a more complex construction works for multiple clocks.}

\begin{restatable}{lemma}{mutualflooding}

\label{lem:mutualflooding}
If for every $\loc \in \RGuards(A)$ we either have $\invariant(\loc)=\top$ or there exists a flooding computation, then there exists a computation of $A^\infty$ that floods all of these locations at the same time.
\end{restatable}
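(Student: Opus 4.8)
The plan is to produce the required computation by \emph{composing} flooding computations for the individual guard locations, leaning on the fact that disjunctive guards are monotone under the addition of processes. First I would reduce to the case where \emph{every} $\loc \in \RGuards(A)$ comes with a flooding computation, not only those with a non-trivial invariant: as already noted, if $\loc \in \ReachLoc{A^\infty}$ and $\invariant(\loc) = \top$ then a flooding computation for $\loc$ exists trivially --- reach $\loc$ in minimal time and keep one process there forever. Writing $\RGuards(A) = \{\loc_1, \dots, \loc_n\}$, I would then fix for each $i$ a flooding computation $\gcomputation_i \in \Lg(A^{m_i})$ and let $\mathcal{I}_i$ be the index set of its $m_i$ processes, chosen pairwise disjoint.

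Next I would take $\gcomputation = \gcomputation_1 \parallel \cdots \parallel \gcomputation_n$, a computation of $A^{m_1 + \cdots + m_n}$ and hence of $A^\infty$, obtained by merging the absolute timelines of the discrete transitions of the $\gcomputation_i$ and refining delays accordingly. The points to verify are: (i) $\gcomputation$ is a legal computation --- in $\gcomputation$ every process follows exactly the location sequence and clock trajectory it has in its own $\gcomputation_i$ (splitting a delay into pieces changes nothing), so all invariants and clock guards stay satisfied, and every location guard satisfied in some $\gcomputation_i$ by a process of $\mathcal{I}_i$ is still satisfied at the same instant in $\gcomputation$, because the processes of the other $\gcomputation_j$ can only \emph{enable} transitions; and (ii) $\gcomputation^{\leq d}$ is finite for every $d$, which holds since each $\gcomputation_i$ has diverging total time (implicit in being a flooding computation), so only finitely many discrete transitions occur before any finite time. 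Then, since $\gcomputation|_{\mathcal{I}_i}$ is $\gcomputation_i$ up to inserted stuttering steps, for every $i$ and every $d \geq \nInfMinReachT{\loc_i}$ we have $\loc_i \in \finalc(\gcomputation_i^{\leq d})$ and therefore $\loc_i \in \finalc(\gcomputation^{\leq d})$; thus $\gcomputation$ keeps every $\loc_i$ occupied from its minimal reachability time onwards, i.e.\ it floods all of $\RGuards(A)$ at the same time.

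I expect the one genuinely load-bearing step to be the legality of the merged computation (point (i) above), and this is exactly where disjunctiveness is essential: for a non-monotone communication primitive, such as ``exactly one process in $\gamma$'', adding the processes of $\gcomputation_j$ to $\gcomputation_i$ could disable a transition and the merge would break. The remaining ingredients --- well-definedness of the timeline merge and the up-to-stuttering identification of $\gcomputation|_{\mathcal{I}_i}$ with $\gcomputation_i$ --- are routine, and are already used implicitly in the proofs of \cref{lemma:locationguards-ordered} and \cref{lem:minreach-single}.
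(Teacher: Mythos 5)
Your proof is correct for the lemma as literally stated, but it takes a genuinely different route from the paper's, and the difference is worth spelling out. You read the hypothesis at face value: each flooding computation $\gcomputation_i$ is a bona fide element of $\Lg(A^{m_i})$, hence self-contained (all of its location guards are discharged by processes inside its own index set $\mathcal{I}_i$), and the conclusion follows by composing the $\gcomputation_i$ and invoking monotonicity of disjunctive guards under the addition of processes --- a clean, direct argument whose only load-bearing step is exactly the one you identify. The paper instead argues by contradiction: it takes flooding computations ``according to \cref{lem:flooding_loop}'', i.e.\ built from lassos of the summary automaton $\hat{A}$ in which location guards are replaced by the time constraints $\gclock \geq \nInfMinReachT{\gamma}$, supposes that in their composition some $\loc_0$ first becomes unoccupied at a minimal global time $d$, and derives that some other guard location $\loc'$ must then have failed strictly earlier --- a contradiction. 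That minimal-counterexample argument is what breaks the circularity inherent in \cref{lem:flooding_loop}, whose hypothesis already assumes flooding computations for all \emph{other} guard locations: the individual constructions are mutually dependent, and the paper's proof of this lemma is where their simultaneous realizability in $A^\infty$ is certified. Your argument buys simplicity and would generalize to any monotone communication primitive, but it presupposes that the $\gcomputation_i$ have already been independently realized as computations of $A^\infty$, which in the paper's pipeline is precisely what is not yet established at this point. So as a proof of the displayed statement it stands; to serve the role the lemma plays in the paper, it would need to be supplemented by an induction on global time in the spirit of the paper's contradiction argument.
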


\ifdefined\VersionForArXiV{
    \iffull
    \begin{proof}
To arrive at a contradiction, assume that for every $\loc \in \RGuards(A)$ we have a flooding computation $\pi_q$ according to \cref{lem:flooding_loop}, %
but in their composition there is a $\loc_0$ such that $\loc_0$ is not occupied at some time {$d \geq \nInfMinReachT{\loc_0}$}.
W.l.o.g., let~$d$ be the smallest global time where this happens, for any of the guard locations.
By construction, the only reason for this can be that on the flooding loop of $\loc_0$ we have some $\loc' \neq \loc_0$ as a location guard.
But since $\pi_{\loc_0}$ has been constructed based on $\hat{A}$, this transition must happen after $\nInfMinReachT{\loc'}$.
Then, this would mean that the flooding construction would have failed for $\loc'$ before it has failed for~$\loc_0$, contradicting our assumption that {$d$} is minimal.
 \qed
\end{proof}

    \fi
}\fi

\smartpar{New Cutoff Results}
In addition to witnessing the correctness of the summary automaton for $A$, the proofs of \cref{lem:flooding_loop,lem:mutualflooding} also allow us to compute a cutoff for the given gTA.
For a location $\loc$, let $w(\loc) = 1$ if $\invariant(\loc)=\top$ and $w(\loc) = max\{2, \lceil \frac{T+d_2}{T-d_3} \rceil\}$, where $T, d_2, d_3$ are as in \cref{lem:flooding_loop}, if there exists a flooding computation based on the lemma.
Intuitively, $w(\loc)$ is the \emph{width} of the flooding computation, \ie{} how many processes need to be dedicated to~$\loc$\LongVersion{ in this computation}.
\LongVersion{Then we get the following cutoff result.
}

\begin{corollary}
For any gTA $A$ that satisfies the conditions of \cref{lem:mutualflooding} and any $m \in \Nats$, $m+ \sum_{\loc \in \RGuards(A)} w(\loc)$ is a cutoff.
\end{corollary}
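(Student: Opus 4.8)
The plan is to combine the cutoff result from \cref{thm:Spalazzi} for the ``easy'' guard locations (those with trivial invariant) with the flooding constructions of \cref{lem:flooding_loop,lem:mutualflooding} for the ``hard'' ones (those with non-trivial invariant, for which a flooding computation based on a lasso exists). First I would recall that, by \cref{lem:mutualflooding}, there is a single computation $\gcomputation_{\mathit{flood}}$ of $A^N$ (for some $N$) that simultaneously floods every $\loc \in \RGuards(A)$: each location with $\invariant(\loc)=\top$ is flooded by a single process (put it on a minimal-time path to $\loc$ and let it stay), and each location with a non-trivial invariant is flooded by the lasso construction of \cref{lem:flooding_loop}, which dedicates exactly $w(\loc) = \max\{2, \lceil \frac{T+d_2}{T-d_3}\rceil\}$ processes. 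By construction the total number of processes used in $\gcomputation_{\mathit{flood}}$ can be taken to be $\sum_{\loc \in \RGuards(A)} w(\loc)$, since the flooding sub-computations for distinct guard locations are independent (each is built purely from the summary automaton $\hat A$, so the only cross-dependencies are already resolved through the minimal reachability times $\nInfMinReachT{\cdot}$).

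The two directions of the cutoff equality are then handled as follows. For $\Lg_{|[1,m]}(A^c) \subseteq \Lg_{|[1,m]}(A^\infty)$ with $c = m + \sum_{\loc \in \RGuards(A)} w(\loc)$, the inclusion is immediate from the definition $A^\infty = \{A^n : n \in \Nats_{>0}\}$, since $A^c$ is one member of the family. For the converse $\Lg_{|[1,m]}(A^\infty) \subseteq \Lg_{|[1,m]}(A^c)$, take any $\gcomputation \in \Lg_{|[1,m]}(A^n)$ for arbitrary $n \geq m$; I would argue that the behavior of the $m$ observed processes depends on the remaining $n-m$ processes only through which guard locations are occupied at which times, and — exactly as in the proof of \cref{lem:deltagmin} and the soundness part of \cref{lem:minreach-single} — a location guard $\loc$ can only be used at a time $\geq \nInfMinReachT{\loc}$, hence $\loc$ must be a reachable guard location, i.e.\ $\loc \in \RGuards(A)$. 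Therefore the $m$ observed processes can be kept running as they are, while the $n-m$ auxiliary processes are replaced by the $\sum_{\loc \in \RGuards(A)} w(\loc)$ processes of the flooding computation $\gcomputation_{\mathit{flood}}$: since $\gcomputation_{\mathit{flood}}$ keeps \emph{every} $\loc \in \RGuards(A)$ occupied from time $\nInfMinReachT{\loc}$ onwards, every guard used by the observed processes along $\gcomputation$ remains enabled, so the composition is a valid computation of $A^c$ whose projection to $[1,m]$ equals $\gcomputation|_{[1,m]}$.

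Concretely I would carry out the steps in this order: (i) state the decomposition $\RGuards(A) = \RGuards^\top(A) \cup \RGuards^{\mathit{inv}}(A)$ into guard locations without and with non-trivial invariant, and record $w(\loc)=1$ resp.\ $w(\loc) = \max\{2,\lceil\frac{T+d_2}{T-d_3}\rceil\}$; (ii) invoke \cref{lem:mutualflooding} to obtain $\gcomputation_{\mathit{flood}}$ on $\sum_{\loc}w(\loc)$ processes flooding all of $\RGuards(A)$ at once, checking that the process count is additive; (iii) prove the trivial inclusion $\Lg_{|[1,m]}(A^c) \subseteq \Lg_{|[1,m]}(A^\infty)$; (iv) prove the non-trivial inclusion by the replacement argument above, using that every location guard appearing along a computation of $A^\infty$ lies in $\RGuards(A)$ and is only used after its minimal reachability time, so that substituting the auxiliary processes by $\gcomputation_{\mathit{flood}}$ preserves enabledness of all guards and the projection onto $[1,m]$.

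The main obstacle I expect is step (iv), and specifically making the ``replacement'' rigorous in the timed setting: one must check that interleaving the fixed behavior of the $m$ observed processes with the flooding computation yields a \emph{well-defined} timed path of $A^c$ — in particular that delay transitions synchronize (all clocks, including the observed ones, advance together) and that no invariant of an auxiliary process is violated at the interleaving points. This is exactly the kind of bookkeeping done in the composition operator $\parallel$ on timed paths and in the proofs of \cref{lemma:locationguards-ordered} and \cref{lem:flooding_loop}, so the argument is available, but it needs to be spelled out that the flooding computation can be time-stretched/aligned to match the (finite or infinite) time horizon of $\gcomputation|_{[1,m]}$, which for the lasso-based flooding relies precisely on the fact established in \cref{lem:flooding_loop} that $\rho_{3,p}$ can be appended arbitrarily often.
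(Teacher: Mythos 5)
Your proposal is correct and follows essentially the same route the paper intends: the corollary is presented as a direct consequence of the flooding constructions, obtained by composing the simultaneous flooding computation of \cref{lem:mutualflooding} (of total width $\sum_{\loc \in \RGuards(A)} w(\loc)$) with the unchanged behavior of the $m$ observed processes, using the fact that guards are only ever used at or after the corresponding minimal reachability times. Your additional care about the well-definedness of the timed composition and the indefinite extension of the lasso loops is exactly the bookkeeping the paper delegates to the proofs of \cref{lem:flooding_loop,lem:mutualflooding}.
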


\smartpar{Sufficient and Necessary Conditions for Decidability}
Note that above, we give sufficient but not necessary conditions to establish that a guard location $\loc$ can always remain occupied after $\nInfMinReachT{\loc}$.
Further note that there are {\DTN}s where a guard location $\loc$ \emph{cannot} always remain occupied after it is reached, and in such cases the language $\mathcal{L}|_{[1,i]}\left(A^\infty\right)$ can be determined iff one can determine all (global-time) intervals in which $\loc$ can be occupied in the \DTN, for all such $\loc$.
While it is known that in this case cutoffs for parameterized verification do not exist~\cite{SpalazziS20}, an approach based on a summary automaton would work whenever these intervals can be determined.
Whether parameterized verification is decidable in the presence of location invariants in general remains an open question.

	\section{Evaluation}\label{section:evaluation}
We compare the performance of parameterized verification of {\DTN}s based on our new techniques to the existing cutoff-based techniques (according to \cref{thm:Spalazzi}, using \Uppaal{}).
To this end, we implemented \cref{alg:minreachtime} and the detection of flooding lassos from \cref{sec:cutoffs}, and constructed three parametric examples:
\begin{ienumerate}
\item parametric versions of the TA in \cref{fig:gtpProtocol} with locations $h_0,\cdots,h_{k-1}, \ell_0,\cdots,\ell_{k-1}, h_{sy}, \ell_{sy}$ for some $k$, without location invariants on the $h_i$\ifdefined\VersionForArXiV (see \cref{sec:extended-example} for $k=3$)\fi, denoted $GCS^\top(k)$ in the following,
\item versions of $GCS^\top(k)$ with invariants on the $h_i$, denoted $GCS(k)$,
\item hand-crafted examples $Star(k)$, where some $q_{\mathsf{final}}$ can only be reached after all $q \in \mathit{Guards}(A)$ with $\card{\mathit{Guards}(A)}=k$ have been reached \ifdefined\VersionForArXiV(see \cref{sec:star-example} \fi{}for $k=4$ and $5$).
\end{ienumerate}

All experiments have been conducted on a machine with Intel$\textregistered$ Core\texttrademark{} i7-8565U
CPU @ 1.80GHz and 16\,GiB RAM, and our implementation as well as the benchmarks
can be found at
\url{https://doi.org/10.5281/zenodo.8337446}.\footnote{For the latest version of the tool refer to the \href{https://github.com/cispa/Verification-Disjunctive-Time-Networks}{GitHub repository}.}

\begin{table}[t]
\caption{Performance comparisons, all times in seconds, timeout (TO) is 3000\,s.}

\begin{subtable}{.35\textwidth}
\caption[position=above]{\footnotesize Comparison of \cref{alg:minreachtime} and \Uppaal{} (on the cutoff system $A^c$) for detection of minimal reachability times on $Star(k)$.}
\label{tab:reachtime}
\scalebox{1}{
\scriptsize
\begin{tabular}{cc|rr}
\toprule
\multicolumn{2}{c|}{\textbf{Benchmark}} &  \multicolumn{2}{c}{\textbf{Time}}  \\
Name & Cutoff $c$ & Alg.~1 & ~~\Uppaal{}\\
\midrule
$Star(4)$ & 5 & ~$4.8$ & ~$36.0$  \\
$Star(5)$ & 6 & $9.1$ & TO\\
$Star(6)$ & 7 & $15.7$ & TO\\

\bottomrule
\end{tabular}
}

\end{subtable}
\hspace{.02\textwidth}
\begin{subtable}{.58\textwidth}
\caption[position=above]{\footnotesize Comparison of verification times for different versions of \cref{fig:gtpProtocol} and properties $\phi_i$, based on the summary automaton ($\hat{A}$) or the cutoff system ($A^c$).}
\label{tab:performance}
\scalebox{1}{
\scriptsize
\begin{tabular}{ccc|cc|rr}
\toprule
\multicolumn{3}{c|}{\textbf{Benchmark}} &  & & \multicolumn{2}{c}{\textbf{Time}}  \\
Name & $\left| Q \right|$ & Cutoff $c$ & Property & Result & $\hat{A}$~~~&  $A^c$~\\
\midrule
\multirow{2}{*}{$GCS^\top(3)$} & \multirow{2}{*}{8} & \multirow{2}{*}{4}&$\phi_1$& False & ~~$< 0.1$ & ~~$< 0.1$  \\

  & &  &$\phi_3$ & True  & $< 0.1$ & 26.0  \\
    
 \midrule
\multirow{2}{*}{$GCS(3)$} & \multirow{2}{*}{8}  & \multirow{2}{*}{7} &$\phi_1$ & False & $< 0.1$ & $< 0.1$ \\
      
  & & &$\phi_3$ & True &  $< 0.1$ & TO  \\

\bottomrule
\end{tabular}
}
\end{subtable}
\end{table}

On $Star(k)$, \cref{alg:minreachtime} significantly outperforms the cutoff-based approach for solving \minreach with \Uppaal{}, as can be seen in \cref{tab:reachtime}. 
On $GCS^\top(3)$, solving \minreach and constructing the summary automaton takes 0.23\,s, and 1.13\,s for $GCS(3)$.
Solving \minreach using cutoffs is even faster, which is not surprising since in this example location guards do not influence the shortest paths.
However, we can also use the summary automaton to check more complex temporal properties of the \DTN, and two representative examples are shown in \cref{tab:performance}: $\phi_1$ states that a process that is in a location $h_i$ will eventually be in a location $\ell_j$, and $\phi_3$ states that a process 
in a location $h_i$ will eventually be in a location $h_j$.
For $\phi_1$, both approaches are very fast, while for $\phi_3$ our new approach significantly outperforms the cutoff-based approach.
Note that even if we add the time for construction of the summary automaton to the verification time, we can solve most queries significantly faster than the cutoff-based approach\LongVersion{ (and in fact we only need to construct the summary automaton once for every benchmark, and then can solve all verification queries for that benchmark quickly)}.
Additional experimental results can be found in \ifdefined\VersionForArXiV\cref{sec:experiments-full}\else\cite{AEJK23report}\fi.

	\section{Conclusion}\label{section:conclusion}
In this work, we proposed a novel technique for parameterized verification of disjunctive timed networks ({\DTN}s), \ie{} an unbounded number of timed automata, synchronizing on disjunctive location guards. %
Our technique to solve \minreach in a network of arbitrary size relies on an extension of the zone graph of a \emph{single} TA of the network---leading to an exponential reduction of the model to be analyzed, when compared to classical cutoff techniques.

If guard locations can always remain occupied after first reaching them, solving \minreach allows us to construct a \emph{summary automaton} that can be used for parameterized verification of more complex properties, which is again exponentially more efficient than existing cutoff techniques.
This is the case for the full class of {\DTN}s without invariants on guard locations, and we give a sufficient condition for correctness of the approach on the larger class with invariants, but with a single clock per automaton.
Moreover, our \emph{ad-hoc} prototype implementation already outperforms cutoff-based verification in \Uppaal{} on tasks that significantly rely on location guards.

Decidability of the parameterized verification problem for {\DTN}s with
invariants on guard locations in general remains an open question, but the
techniques we introduced are a promising step towards solving it in cases where
it is known that cutoffs do not exist~\cite{SpalazziS20}.

	\section*{Data Availability}

		The program (including the source code) and benchmark files as evaluated in \cref*{section:evaluation} are available at \url{https://doi.org/10.5281/zenodo.8337446}. Additionally, we plan to track all future development of our tool in the GitHub repository at \url{https://github.com/cispa/Verification-Disjunctive-Time-Networks}.

	\bibliographystyle{splncs04}
	\bibliography{main}

	\ifdefined\VersionForArXiV
		\appendix
		\newpage
		\iffull
\else
\input{proofs}
\fi

\section{DBMs: Matrix Representation of Zones}\label{appendix:DBMs}

For a set of clocks $\mathcal{C}$, we denote by $\mathcal{C}_0 = \mathcal{C} \cup \{c_0\}$ the set $\mathcal{C}$ extended with a special variable~$c_0$ with the constant value~$0$.
For convenience we sometimes write~$0$ to represent the variable~$c_0$.

A \emph{difference bound matrix (DBM)} for a set of clocks $\mathcal{C}$ is a $|\mathcal{C}_0| \times |\mathcal{C}_0|$-matrix $(Z_{xy})_{x,y \in \mathcal{C}_0}$, in which each entry $Z_{xy}= (\lessdot_{xy},c_{xy})$ represents the constraint $x-y \lessdot_{xy} c_{xy}$ where $c_{xy} \in \Ints$ and $\lessdot_{xy} \in \{<, \leq\}$ or $(\lessdot_{xy},c_{xy}) = (<, \infty)$.
Then, a DBM $(Z_{xy})_{x,y \in \mathcal{C}_0}$ represents the zone $\bigwedge_{x,y \in \mathcal{C}_0} (Z_{xy})$.

A DBM is \emph{canonical} if none of its constraints can be strengthened without reducing the set of solutions.
Given a $n \times n$-DBM with a non-empty set of solutions, the canonical DBM with the same solutions can be calculated in time $\mathcal{O}(n^3)$.
For more details on DBMs, %
see~\cite{BY03}.

\section{Extended Example $GCS(3)$}
\label{sec:extended-example}

\begin{figure}[h]
  \scalebox{1}{
			\begin{tikzpicture}[scale=0.7, font=\footnotesize]
        \begin{scope}[]
          
          \node [location] (q1) at (1, -2) { $h_{sy}$ };
          \node [location] (q8) at (1,-4) { $\ell_{sy}$ };
          \node [location] (q2) at (4,0) {  $h_0$ };
          \node [location] (q3) at (9,0) {  $h_1$ };
          \node [location] (q4) at (14,0) {  $h_2$ };
          \node [location] (q5) at (4,-6) { $\ell_0$ };
          \node [location] (q6) at (9,-6) { $\ell_1$ };
          \node [location] (q7) at (14,-6) {  $\ell_2$ };
          
          \node [invariant] at (5,-0.5) {$x \leq 2$};
          \node [invariant] at (10,-0.5) {$x \leq 2$};
          \node [invariant] at (15,-0.5) {$x \leq 2$};
          \node [invariant] at (5,-6.5) {$x \leq 4$};
          \node [invariant] at (10,-6.5) {$x \leq 4$};
          \node [invariant] at (15,-6.5) {$x \leq 4$};
          
          \node [rotate=35]  (t1) at (2.55,-2.65)   {$x \leftarrow 0$};
          \node [locguard,rotate=35]  (t1) at (2.8,-3.1)   {$h_0$};
          \node [rotate=14]  (t1) at (4.5,-2.9)   {$x \leftarrow 0$};
          \node [locguard,rotate=14]  (t2) at (4.5,-3.5)   {$h_1$};   
          \node []  (t1) at (5.7,-3.7)   {$x \leftarrow 0$};      
          \node [locguard,] (t3) at (5.7, -4.3){$h_2$};
          
          \node []  (t1) at (2.2,1.2)   {$x \leftarrow 0$};
          \node [locguard] (t4) at (2.2,0.7)  {$h_0$};
          \node []  (t1) at (4,2)   {$x \leftarrow 0$};
          \node [locguard] (t5) at (4,1.5)      {$h_1$};
          \node []  (t1) at (6.5,2.75)   {$x \leftarrow 0$};
          \node [locguard] (t6) at (6.5,2.25){$h_2$};
         
         \node [rotate = -50] (t4) at (12.8, -4.2){$x \neq 2, x \leftarrow 0$};
         \node [locguard, rotate = -50] (t4) at (12.4, -4.6){$h_0, h_1, h_2$};
          
         \node [rotate = 35]  (t4) at (8.85,-2.75)    {$x \neq 2, x \leftarrow 0$};
         \node [locguard,rotate = 32]  (t4) at (9.5,-3.1)    {$h_0, h_1, h_2$};
         
         \node [rotate = -50]  (t4) at (5.95,-2)    {$x \neq 2, x \leftarrow 0$};
         \node [locguard,rotate = -50]  (t4) at (5.3,-2.1)    {$h_0, h_1, h_2$};
         
         \node [] at (9,1.8) {$x = 2, x \leftarrow 0$};
         \node [] at (6.5,.2) {$x = 2, x \leftarrow 0$};
         \node [] at (11.5,.2) {$x = 2, x \leftarrow 0$};
         \node [] at (9,-7.5) {$x \geq 1,x \leftarrow 0$};
         \node [] at (6.5,-5.8) {$x \geq 1,x \leftarrow 0$};
         \node [] at (11.5,-5.8) {$x \geq 1,x \leftarrow 0$};

          \node  (fake) at (2.5, 0) {};
                  
        \end{scope}

    \begin{scope}[->, rounded corners, >=stealth]
     \draw (q2)--(q3);
     \draw (q3)--(q4);
     \draw (q4)[rounded corners=10pt]--(9,1.75)--(q2);
     \draw (q5)--(q6);
     \draw (q6)--(q7);
     \draw (q4)--(q5);
    
     \draw(q7)[rounded corners=10pt]--(14,-7.75)--(4,-7.75)--(q5);
     
     \path[] (q5) edge[bend right] node[] {} (q8);
     \path[] (q8) edge[bend right] node[] {} (q5);
     
     \draw (q1) [rounded corners=10pt] -- (1.5,1) -- (3,1)  -- (q2) ;
     \draw (q1) [rounded corners=10pt] -- (1.25,1.75) -- (5,1.75) [sharp corners] -- (q3) [rounded corners=5pt] ;
     \draw (q1) [rounded corners=10pt] -- (1,2.5) -- (14,2.5) [sharp corners] -- (q4)[rounded corners=5pt] ;
     
     \draw(q2)[rounded corners=10pt] --(q1);
     
     \draw[](q2)--(q6);

     \draw[](q3)--(q7);

       \draw (q8) [rounded corners=10pt]  --(4,-2)  --    (q2)   [rounded corners=5pt] ;            
       \draw (q8) [rounded corners=10pt]   -- (7.5,-2.5)-- (q3)   [rounded corners=5pt] ;
       \draw (q8) [rounded corners=10pt]   -- (9.5,-4)-- (q4)   [rounded corners=5pt] ;
       \draw (fake) edge (q2); 
    \end{scope}

\end{tikzpicture}}
  \end{figure}

\newpage
\section{Crafted Example $Star(k)$}
\label{sec:star-example}

\begin{figure}[h]
\includegraphics[width=.7\textwidth]{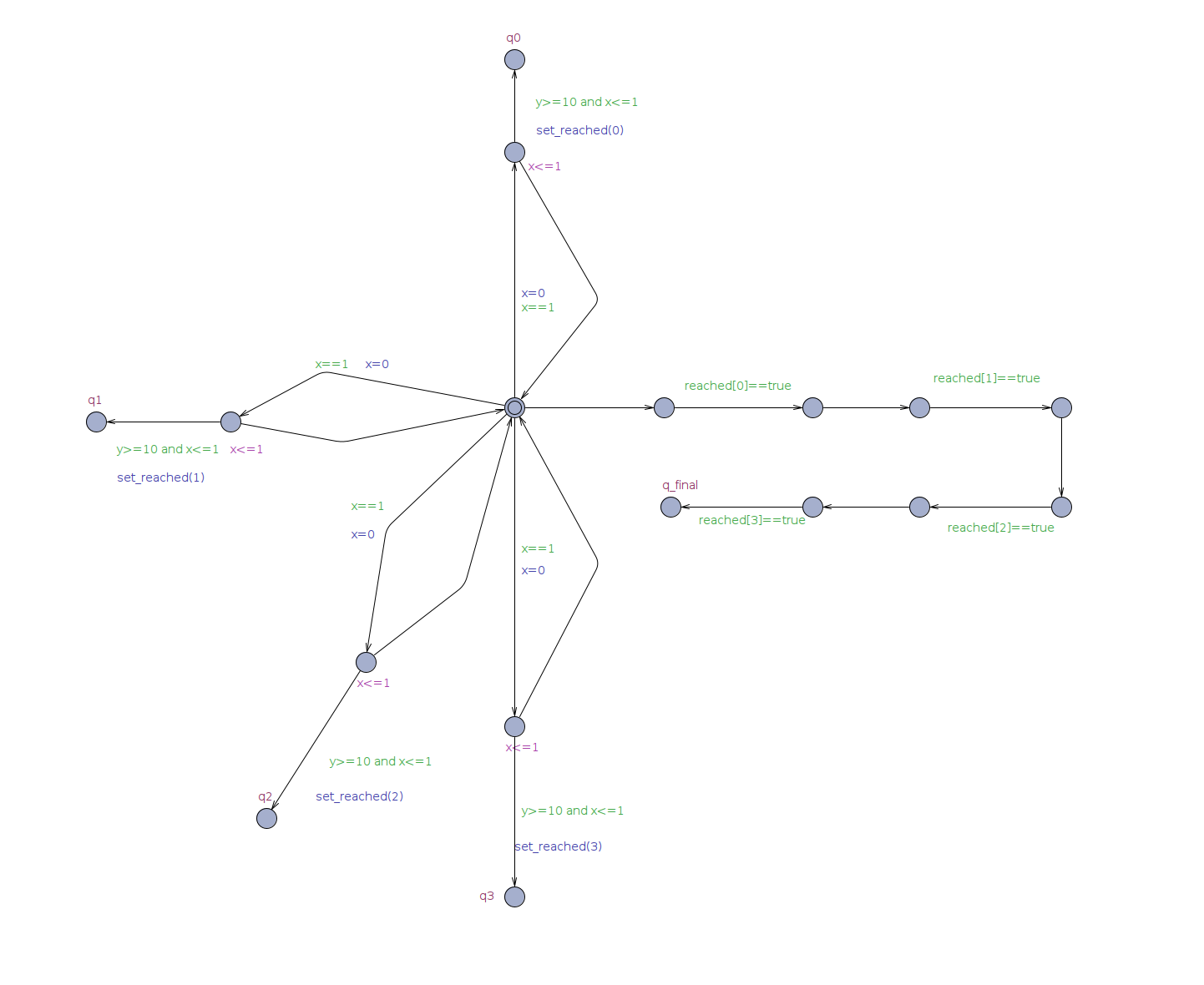}
\vspace*{-0.75cm}
\caption{Crafted Example $Star(4)$}
\end{figure}

\begin{figure}[h]
\includegraphics[width=.66\textwidth]{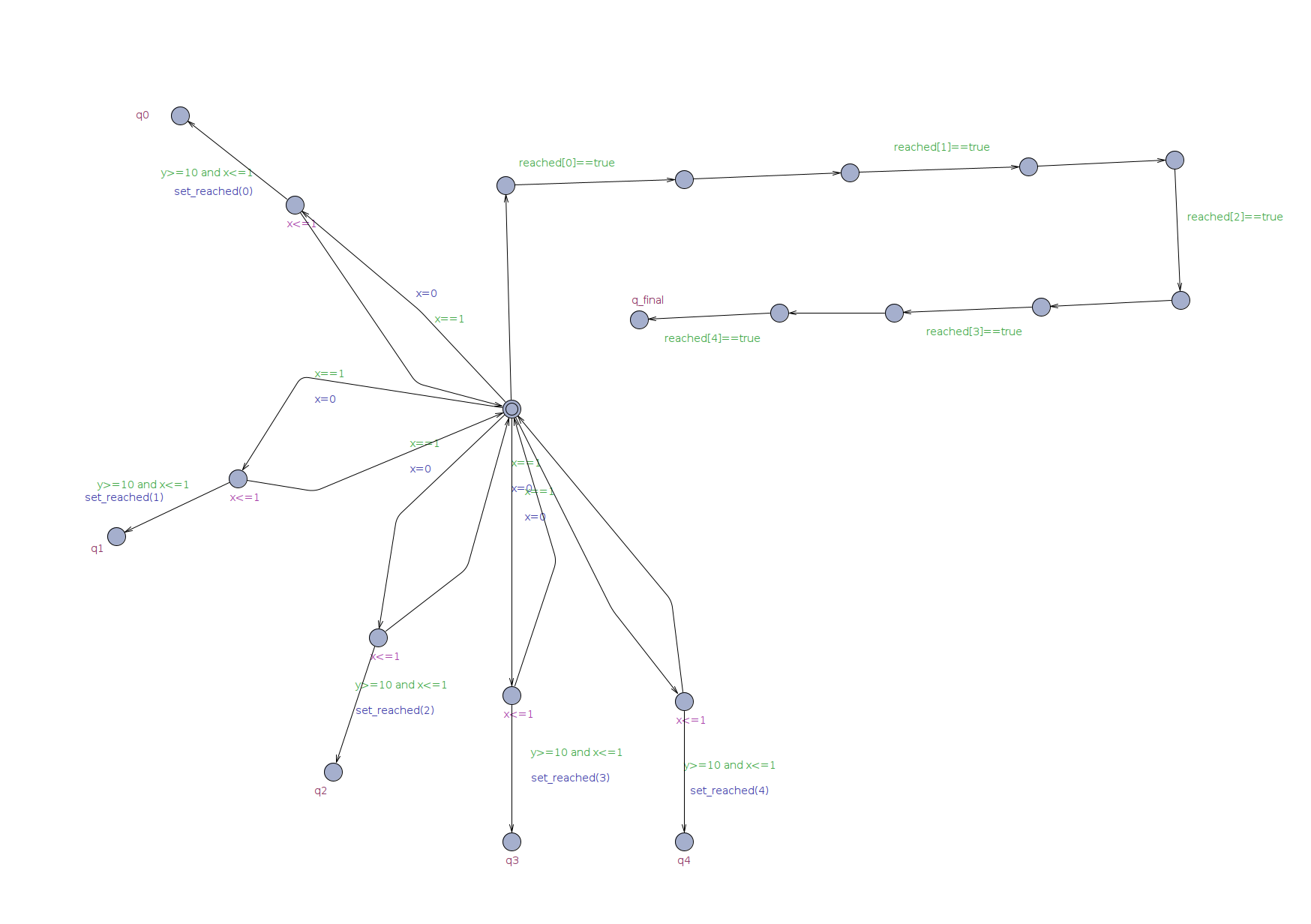}
\vspace*{-0.5cm}
\caption{Crafted Example $Star(5)$}
\end{figure}

\newpage
\section{Full Experiments for $GCS$ Examples}
\label{sec:experiments-full}
The construction of the summary automaton took 0.23\,s for $GCS^\top(3)$, 1.13\,s for $GCS(3)$, 0.86\,s for $GCS^\top(4)$, and 5.50\,s for $GCS(4)$.

\cref{tab:performance-full} gives the results of checking five properties $\phi_i$ on $GCS^\top(k)$ and $GCS(k)$ with $k \in \{3,4\}$, where:
\begin{ienumerate}
\item $\phi_1$ states that a process that is in a location $h_i$ will eventually be in a location $\ell_j$, 
\item $\phi_2$ states that a process that is in a location $\ell_i$ will eventually be in a location $h_j$, 
\item $\phi_3$ states that a process in a location $h_i$ will eventually be in a location $h_j$,
\item $\phi_4$ states that a process in a location $\ell_i$ will eventually be in a location $\ell_j$, and
\item $\phi_5$ states that a process in location $h_0$ will eventually be in location $h_0$.
\end{ienumerate}

\begin{table}[h]
\caption[position=above]{\footnotesize Comparison of verification times for different versions of \cref{fig:gtpProtocol} and properties $\phi_i$, based on the summary automaton ($\hat{A}$) or the cutoff system ($A^c$).}
\label{tab:performance-full}
\scriptsize
\begin{tabular}{ccc|cc|rr}
\toprule
\multicolumn{3}{c|}{\textbf{Benchmark}} &  & & \multicolumn{2}{c}{\textbf{Time}}  \\
Name & $\left| Q \right|$ & Cutoff $c$ & Property & Result & $\hat{A}$~~~&  $A^c$~\\
\midrule
\multirow{2}{*}{$GCS^\top(3)$} & \multirow{2}{*}{8} & \multirow{2}{*}{4}&$\phi_1, \phi_2$& False & ~~$< 0.1$ & ~~$< 0.1$  \\

  & &  &$\phi_3, \phi_4, \phi_5 $ & True  & $< 0.1$ & 26.0  \\
    
 \midrule
\multirow{2}{*}{$GCS(3)$} & \multirow{2}{*}{8}  & \multirow{2}{*}{7} &$\phi_1, \phi_2$ & False & $< 0.1$ & $< 0.1$ \\
      
  & & &$\phi_3, \phi_4, \phi_5$ & True &  $< 0.1$ & TO  \\
	
\midrule
\multirow{2}{*}{$GCS^\top(4)$} & \multirow{2}{*}{10} & \multirow{2}{*}{5}&$\phi_1, \phi_2$& False & ~~$< 0.1$ & ~~$< 0.1$  \\

  & &  &$\phi_3, \phi_4, \phi_5 $ & True  & $< 0.1$ & TO  \\
    
 \midrule
\multirow{2}{*}{$GCS(4)$} & \multirow{2}{*}{10}  & \multirow{2}{*}{9} &$\phi_1, \phi_2$ & False & $< 0.1$ & $< 0.1$ \\
      
  & & &$\phi_3, \phi_4, \phi_5$ & True &  $< 0.1$ & TO  \\

\bottomrule
\end{tabular}
\end{table}

	\fi

\end{document}